\newcommand{\Ecomment}[1]{}
\newcommand{\Scomment}[1]{}
\title{Expressive probabilistic sampling in recurrent neural networks}
\author{%
  Shirui Chen\thanks{Corresponding author} \\
  Department of Applied Mathematics\\
  University of Washington, Seattle \\
  \texttt{sc256@uw.edu} \\
  \And
  \hspace{2em}Linxing Preston Jiang \\
  \hspace{2em}Paul G. Allen School of Computer Science \\ 
  \hspace{2em}\& Engineering \\
  \hspace{2em}University of Washington, Seattle \\
  \hspace{2em}\texttt{prestonj@cs.washington.edu} \\
  \And
  \hspace{-3.5em}Rajesh P. N. Rao \\
  \hspace{-3.5em} Paul G. Allen School of Computer Science \\ 
  \hspace{-3.5em}\& Engineering and Center for Neurotechnology \\
  \hspace{-3.5em}University of Washington, Seattle \\
  \hspace{-3.5em}\texttt{rao@cs.washington.edu} 
  \And 
  Eric Shea-Brown \\
  Department of Applied Mathematics\\
  Computational Neuroscience Center \\
  University of Washington, Seattle \\
  \texttt{etsb@uw.edu} \\
}
\begin{document}

\maketitle

\begin{abstract}
  In sampling-based Bayesian models of brain function, neural activities are assumed to be samples from probability distributions that the brain uses for probabilistic computation. However, a comprehensive understanding of how mechanistic models of neural dynamics can sample from arbitrary distributions is still lacking. We use tools from functional analysis and stochastic differential equations to explore the minimum architectural requirements for $\textit{recurrent}$ neural circuits to sample from complex distributions. We first consider the traditional sampling model consisting of a network of neurons whose outputs directly represent the samples ($\textit{sampler-only}$ network). We argue that synaptic current and firing-rate dynamics in the traditional model have limited capacity to sample from a complex probability distribution. We show that the firing rate dynamics of a recurrent neural circuit with a separate set of output units can sample from an arbitrary probability distribution. We call such circuits $\textit{reservoir-sampler networks}$ (RSNs). We propose an efficient training procedure based on denoising score matching that finds recurrent and output weights such that the RSN implements Langevin sampling. We empirically demonstrate our model's ability to sample from several complex data distributions using the proposed neural dynamics and discuss its applicability to developing the next generation of sampling-based Bayesian brain models.
\end{abstract}

\section{Introduction}
There is growing evidence that humans and other animals make decisions by representing uncertainty internally and carrying out probabilistic computations that approximate Bayesian inference \citep{rao_probabilistic_2002,knill_bayesian_2004,doya_bayesian_2007, griffiths_probabilistic_2010, ma_neural_2014}. How networks of neurons in the brain represent probability distributions for Bayesian inference has remained a major open question. There exist two major theories: one assumes that the neural activities encode the parameters of the underlying posterior distributions over sensory stimuli \citep{beck_probabilistic_2008, ma_bayesian_2006, vertes_flexible_2018}. The other is the sampling-based hypothesis, which assumes that the neural responses can be interpreted as samples from a posterior distribution \citep{hoyer_interpreting_2002}. Under this hypothesis, recurrent neural circuits make use of their inherent stochasticity to produce samples from posterior distributions. The sampling-based theory has explained various experimental observations regarding neural variability \citep{echeveste_cortical-like_2020, festa_neuronal_2021, orban_neural_2016}, perceptual decision-making \citep{haefner_perceptual_2016} and spontaneous cortical activity \citep{berkes_spontaneous_2011}. 

Many studies have proposed biologically plausible spiking rules and membrane dynamics models to implement sampling-based probabilistic inference. However, most of these studies mainly consider sampling from discrete Boltzmann distributions \citep{buesing_neural_2011} and multivariate Gaussian distributions \citep{dong_adaptation_2022, aitchison_hamiltonian_2016, masset_natural_2022, hennequin_fast_2014}, only match the first two moments of the distribution \citep{echeveste_cortical-like_2020}, or employ a Monte-Carlo approximation \citep{huang_neurons_2014}. Although these studies use algorithmic substrates that can sample from any distribution with a density function \textit{in theory}, it is not clear whether the underlying neural dynamics are capable of implementing a sufficiently expressive version of these sampling methods. Furthermore, natural image statistics are strongly non-Gaussian \citep{olshausen_emergence_1996}, and experimental evidence shows that humans use non-Gaussian prior representations to support cognitive judgments \citep{houlsby_cognitive_2013, griffiths_optimal_2006}. 
It is known that deep artificial neural networks can be used to generate samples from complex data distributions \citep{song_generative_2019, song_score-based_2021} using a ``U-net'' \citep{ronneberger_u-net_2015} backbone. However, the neural circuits in the cortex are highly coupled with an abundance of recurrent synapses. Therefore, an outstanding question for probabilistic computation in the brain is: what kind of recurrent neural network is capable of efficiently learning to produce samples from an arbitrarily complex posterior distribution? 

In this paper, we study this question under the basic assumption that the dynamics of recurrent neural circuits can be described by stochastic differential equations (SDEs). Note that this assumption is common to a broad range of past research that uses rate-based neural dynamics to implement sampling-based coding\citep{aitchison_hamiltonian_2016, echeveste_cortical-like_2020, dong_adaptation_2022}. Moreover, spike-based models \citep{savin_spatio-temporal_2014, rullan_buxo_sampling-based_2021, masset_natural_2022} that implement balanced spiking networks (BSNs) \citep{boerlin_predictive_2013, buxo_poisson_2020} essentially train spiking networks to follow underlying continuous-time SDEs, so our work applies to this line of research as well (details in Appendix \ref{app:equivalence}).


The contributions of this paper are as follows (Figure \ref{fig:archi}):
\begin{enumerate}



\item We establish the relationship between the sampling power of the neural dynamics and the ability of the dynamics to approximate the score function, which is the gradient of the log probability density function. (Section \ref{sec:score})

\item We show that the synaptic current dynamics of a network of neurons whose outputs directly represent the samples (traditional sampler-only network) is only able to approximate score functions that are in a finite-dimensional function space. (Section \ref{subsec:synap}, Proposition \ref{prop:lim_exp})

\item We prove that the firing rate dynamics of our proposed reservoir-sampler network can sample from a distribution whose score function can approximate that of arbitrary target distributions (with mild restrictions) to arbitrary precision (Section \ref{subsec:fr}, Theorem \ref{thm:main}). 

\item We derive a computationally efficient and biologically-plausible learning rule for our proposed model (Section \ref{subsec:training}) that sidesteps the demands of backpropagation through time, and we empirically demonstrate how our model can sample from several complex data distributions (Section \ref{sec:exp}).



\end{enumerate}

And interpretation of our contributions in biological terms is as follows:  1. Flexible synaptic connectivity within a circuit itself is not enough to allow that circuit to flexibly produce arbitrary patterns of variability involving all of its neurons. 2. In order for the circuit to achieve full flexibility in its output patterns, there need to be hidden variables involved, e.g. states of neurons in an upstream brain area or possibly non-synaptic signaling. If this condition is met, concrete and fairly efficient plasticity rules may be capable of shaping the output patterns as desired. 

\begin{figure}[htb!]
  \centering
  \includegraphics[width=\linewidth]{./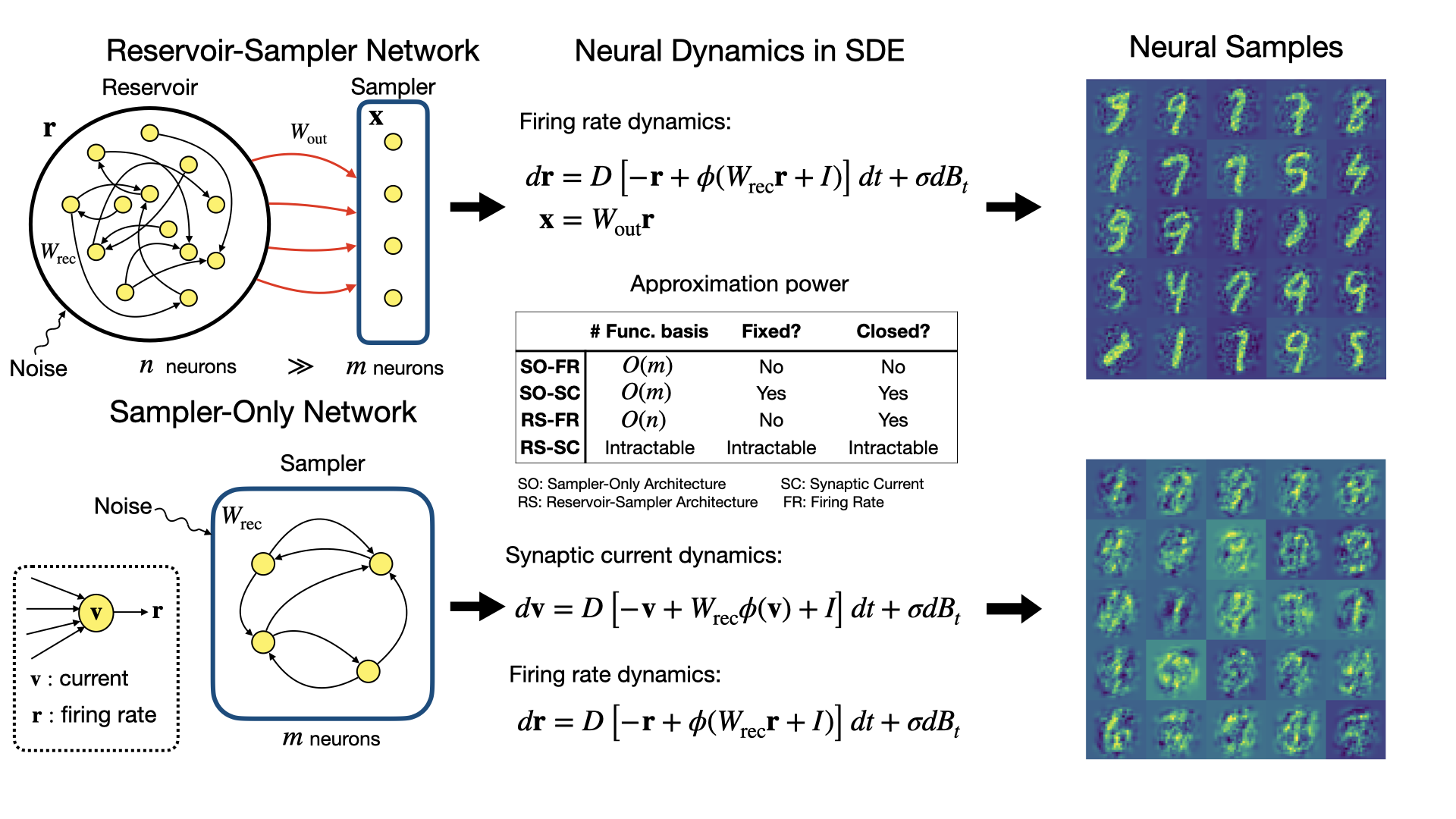}
  \caption{\textbf{Reservoir-Sampler Networks versus Traditional Sampler-Only Networks.} The sampler neurons that produce samples from the target distribution are present for both the sampler-only (SO) network and the reservoir-sampler (RS) network. The reservoir neurons (shown within the large black circle) are only present in the RS network. We explore the firing rate (FR) dynamics (Section \ref{subsec:fr}) for both networks and the synaptic current (SC) dynamics (Section \ref{subsec:synap}) only for the sampler-only network because the stationary distribution of the output neurons is intractable in the RS-SC case. We evaluate the approximation power of the function set represented by the drift term of the neural dynamics, and list the number of basis functions that span the set, whether these basis functions are fixed, and whether the function set is closed under addition (shown in the table).  }
  \label{fig:archi}
\end{figure}

\section{Background} \label{sec:bg}
\subsection{Fokker-Planck equation and stationary distribution}
We consider a general time-homogeneous SDE with drift vector $\m{b}(X_t)$ and diffusion matrix $\Sig = \f{1}{2} \sigma \sigma^T$:
\begin{equation}
    dX_t = \m{b}(X_t) dt + \sig dB_t
\end{equation}
where $\sig\in \R^{n\times m}$ is the diffusion coefficient, and $B_t$ is an $m$-dimensional standard Wiener process. The Fokker-Planck equation of this SDE describes the time evolution of the probability density $p(x, t)$ for a given SDE, assuming the initial density $p(x, 0)$ is known:

\begin{equation} \label{eq:fp_sc}
    \partial_t p = \nabla \cdot (\Sig\nabla p - \m{b}p)
\end{equation}

where $\partial_t = \partial/\partial t$, $\nabla\cdot$ is the divergence operator, and $\nabla$ is the gradient operator. A stationary probability density function is one for which the right-hand-side of equation \eqref{eq:fp_sc} is 0. Mild regularity conditions guaranteeing the existence and uniqueness of a stationary density function are discussed in, for example, \citet{cerrai_second_2001}. We assume these are satisfied by the SDEs considered in this paper. Moreover, to ensure ergodicity and a well-defined score function, we assume that $p$ is supported on $\R^n$, i.e. $p(x)>0$ for all $x\in \R^n$. 

As a special case, consider the following Langevin dynamics, in which the drift term is given by the gradient of the log stationary probability density $\nabla p(x)$, 
\begin{equation} \label{eq:langevin}
  dX_t = \nabla \log p(X_t) dt + \sqrt{2} dB_t.
\end{equation}
It can be verified through the Fokker-Planck equation that $p(x)$ in the dynamics above is indeed a stationary probability density function of the dynamics. Therefore the Langevin dynamics can sample from the distribution $p(x)$ as $t\to \inf$. While the drift term in Langevin dynamics is a gradient vector field, this is not true in general and is often not the case for recurrent neural dynamics (Proposition \ref{prop:1} in Appendix \ref{app:1}). 

\subsection{Score-based generative modeling}
If we would like a particular dynamics to implement Langevin dynamics, we need to fit the drift term of an SDE to the score $s_\tet(\mx) = \nabla \log p(\mx)$ of the probability distribution that we are trying to sample from. This procedure of fitting the score function $s_\tet(\mx)$ is called score matching. In this section, we give a brief summary of one of the major methods of score matching that we will use in this paper, denoising score matching (DSM) \citep{vincent_connection_2011}. The general idea of DSM is to match the score of a noise-perturbed version of the target distribution. More specifically, the explicit score matching loss (left-hand side) and the denoising score matching loss (right-hand side) are related to each other as follows \citep{vincent_connection_2011}:
\begin{equation}\label{eq:DSM}
  \bbE_{q_\sig(\tilde{\mx})}\left[\f{1}{2}\Norm{s_\tet(\tilde{\mx}) - \nabla_{\tilde{\mx}} \log q_\sig (\tilde{\mx})}^2\right] = \bbE_{q_\sig(\tilde{\mx}, \mx)}\left[\f{1}{2}\Norm{s_\tet(\tilde{\mx}) - \nabla_{\tilde{\mx}} \log q_\sig (\tilde{\mx}| \mx)}^2\right] + C_\sig
\end{equation} 
where $\tilde{\mx}$ is a noise-perturbed version of $\mx$, so $q_\sig(\tilde{\mx}|\mx) = \cN(\mx, \sig)$, and $C$ is a constant depending on $\sig$ but \textit{not} on $\tet$. When the noise is 0, i.e. $\tilde{\mx} = \mx$, the left-hand side of the equation above is the explicit score matching loss. Although in theory, we can start from extremely small Gaussian noise, and directly optimize the right hand of equation \eqref{eq:DSM}, empirically it is beneficial to start with large Gaussian noise and gradually decrease the noise magnitude until $q_\sig (\mx) \approx p(\mx)$ and $s_\tet(\mx)\approx \nabla_\mx\log q_\sig(\mx) \approx \nabla_\mx \log p(\mx)$. Specifically, this has been shown to stabilize the training and improve score estimation \citep{song_generative_2019}.

\section{Methods} \label{sec:methods}

\subsection{Do we really need to match the score?} \label{sec:score}
The Langevin dynamics provides an elegant way to construct an SDE given a specific stationary distribution. However, as noted above, the neural network dynamics seldom have a drift term that is a gradient field (Appendix \ref{app:1}). A natural question is therefore whether an SDE with a drift term that is \textit{not} a gradient field (also known as irreversibility) also gives rise to a specific stationary distribution. The answer requires us to look at the Fokker-Planck equation:
\begin{equation}\label{eq:fp}
    \partial_t p(\m{v}, t) = \nabla \cdot (\Sig\nabla p - pF_\tet(\m{v}))
\end{equation}
where $p(\m{v}, t)$ is the probability density function of the variable of interest $\m{v}$ at time $t$, $F_\tet(\m{v})$ is the drift term of the neural dynamics parametrized by $\tet$, and $\Sig = \f{1}{2} \sig\sig^T$ is the diffusion matrix. Since the right-hand side of \eqref{eq:fp} needs to be 0 for a given stationary distribution $p(\m{v})$, we have
\begin{equation}\label{eq:fp_stat}
    \nabla \cdot (\Sig\nabla p - pF_\tet(\m{v})) = 0.
\end{equation}
Therefore $G:= -\Sig\nabla p + pF_\tet(\m{v})$ needs to be a divergence-free (DF) vector field. In other words, $pF_\tet(\m{v})$ is unique up to a DF field given a fixed stationary distribution $p$. \citet{ma2015complete} shows that there exists a skew-symmetric matrix $Q$ such that the DF field can be written as $G = Q\nabla p + p \sum_j \f{\partial}{\partial \m{v}_j} Q_{ij}$, however, this does not shed more light on how expressive $\set{F_\tet}_\tet$ needs to be without more knowledge of $Q$ and its derivative. We show below that under certain conditions, the DF field $G$ can be regarded as a component that is orthogonal to the score function. Therefore the function space $\set{F_\tet}_\tet$ needs to have enough basis functions so that (when projected) it is able to approximate the score function of the target distribution.


We first note that it would be convenient if $F_\tet(\m{v})$ could approximate the gradient fields $\Sig^{-1}\nabla\log p$ for any $p$, in which case $G  =0$. To find out if this is a necessary condition for the dynamics to sample from an arbitrary target distribution $p$, we let $\Sig = \m{I}$ and invoke the Helmholtz-Hodge decomposition (HHD) \citep{bhatia_helmholtz-hodge_2013, majda_vorticity_2002}. The decomposition theorem states that any sufficiently smooth vector field in $L^2(\R^n;\R^n)$ can be uniquely decomposed into a DF vector field and a pure gradient field. In other words, the function space of all DF fields $G$ and the function space of all gradient fields $\nabla p$ are the orthogonal complement of each other. Therefore the projection of $pF_\tet(\m{v})$ onto the subspace of smooth gradient fields still needs to be able to approximate $\nabla p$ despite the freedom to choose arbitrary DF field $G$. 

For example, in the 1-D case, if we assume that both $pF_\tet(\m{v})$ and $\nabla p$ are square-integrable (so they vanish at infinity), then the divergence-free vector field $G$ (which must be constant in 1-D) is 0, therefore $F_\tet(\m{v}) = \nabla  \log p$. As a result, $\set{F_\tet}_\tet$ indeed needs to be able to approximate $\nabla \log p$ for every $p$. In higher dimensions, the same conclusion holds under the assumption of a strict orthogonality constraint: 
\begin{proposition}\label{prop:strict_ortho}
  Let $p$ be the stationary distribution of the neural dynamics, and the diffusion matrix be the identity matrix. If the DF field $G$ is \textit{strictly orthogonal} to the gradient field $\nabla p$, meaning that $G(\m{v})\cdot \nabla p(\m{v}) = 0$ for all $\m{v}$,  then the drift term $F_\tet(\m{v})$ can be written as the sum of a divergence-free field $p^{-1}G$ and a gradient field $\nabla \log p$. 
\end{proposition}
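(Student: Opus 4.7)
The plan is to unwind the definition of $G$ from equation (5), algebraically solve for $F_\tet$, and then verify that the piece of $F_\tet$ which is not the gradient $\nabla \log p$ is in fact divergence-free under the strict orthogonality hypothesis.

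First I would start from the identity $G = -\Sig \nabla p + p F_\tet(\m{v})$ introduced just after equation \eqref{eq:fp_stat}. Setting $\Sig = \m{I}$ and using that $p > 0$ everywhere (which the excerpt already assumes for ergodicity), I can divide by $p$ to obtain
\begin{equation*}
  F_\tet(\m{v}) \;=\; p^{-1} G(\m{v}) \;+\; p^{-1}\nabla p(\m{v}) \;=\; p^{-1}G(\m{v}) + \nabla \log p(\m{v}).
\end{equation*}
So the sum decomposition falls out directly; the only remaining task is to confirm that $p^{-1}G$ is itself a divergence-free field, which is where the strict orthogonality hypothesis enters.

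To verify that $p^{-1}G$ is divergence-free, I would apply the product rule for the divergence of a scalar times a vector field:
\begin{equation*}
  \nabla \cdot (p^{-1} G) \;=\; p^{-1}\,\nabla\cdot G \;+\; \nabla(p^{-1}) \cdot G \;=\; p^{-1}\,\nabla\cdot G \;-\; p^{-2}\,\nabla p \cdot G.
\end{equation*}
The first term vanishes because $G$ is divergence-free by hypothesis, and the second term vanishes pointwise by the strict orthogonality assumption $G(\m{v})\cdot \nabla p(\m{v})=0$. Hence $p^{-1}G$ is a DF field, and $\nabla \log p$ is manifestly a gradient field, completing the claim.

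I do not expect any real obstacle here: the argument is essentially two lines of calculus once the defining equation for $G$ is rearranged. The only subtlety worth flagging in the write-up is the role of the pointwise orthogonality assumption (as opposed to mere $L^2$-orthogonality from HHD): without it, $\nabla p \cdot G$ would only vanish in an integrated sense and the above product-rule computation would fail to give $\nabla\cdot(p^{-1}G)=0$ pointwise. This is precisely what justifies calling the hypothesis ``strict'' orthogonality and what sharpens the heuristic HHD discussion preceding the proposition into a rigorous statement.
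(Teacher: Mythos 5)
Your proof is correct and follows essentially the same route as the paper's: rearrange the defining relation $G = -\Sig\nabla p + pF_\tet$ with $\Sig = \m{I}$ to obtain $F_\tet = p^{-1}G + \nabla\log p$, then use the product rule together with $\nabla\cdot G = 0$ and the pointwise orthogonality $G\cdot\nabla p = 0$ to conclude that $p^{-1}G$ is divergence-free. Your remark on why pointwise (rather than merely $L^2$) orthogonality is needed is a nice additional clarification, but the core argument matches the paper's.
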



The proof and further detail is in Appendix \ref{app:strict_ortho}. The form of the above decomposition coincides with that of the HHD \citep{bhatia_helmholtz-hodge_2013, bhatia_natural_2014}. Therefore if we enforce the \textit{normal-parallel} boundary condition \citep{chorin_mathematical_1993} for the gradient component, the HHD theorem \citep{bhatia_helmholtz-hodge_2013} says that the orthogonal projection of the function space $\set{F_\tet}_\tet$ onto the space of gradient fields is the function space of gradient fields $\set{\nabla \log p}_p$ satisfying the boundary condition given the strict orthogonality constraint on $G$. 
The upshot is that $\set{F_\tet}_\tet$ needs to admit enough basis functions (Appendix \ref{app:strict_ortho}). Note that the boundary condition will not be restrictive if we take a sufficiently large bounded region. Therefore it is essential for the neural dynamics to have an expressive functional form that is able to approximate complex score functions, even if their dynamics are not gradient fields.

Previous work has rigorously established that the strict orthogonality constraint holds, in particular, when the nonlinear dynamics is linearized around fixed points of $\m{v}$ (where the drift term is zero \citep{ao_potential_2004, kwon_structure_2005, qian_zeroth_2014}).  As a consequence, the conditions of Proposition \ref{prop:strict_ortho} are true locally around fixed points. 

In what follows, we assume that the conditions of the Proposition \ref{prop:strict_ortho} hold. Under this assumption, without loss of generality, we set $G$ to be $\m{0}$ in the following text and explore whether $\Sig^{-1} F_\tet$, which is determined by the specific neural dynamics (Figure \ref{fig:archi}), is able to approximate complex score function $\nabla \log p$.


\subsection{Synaptic current dynamics: sampler-only networks with limited capacity}\label{subsec:synap} 
We consider the following stochastic synaptic current dynamics \citep{dayan_theoretical_2005} (cf. eq. 7.39) that describe a recurrent neural network in terms of the synaptic current that each neuron receives: 
\begin{equation} \label{eq:cureq}
    d \m{v} = D(-\m{v} + W \phi(\m{v}) + I) dt + \sig d\m{B}_t := \msc(\m{v}) dt + \sig d\m{B}_t
\end{equation}
where $\msc(\m{v}):=D(-\m{v} + W \phi(\m{v})+I)$, $\m{v} = [v_1, \cdots, v_m]^T\in \R^m$ is the synaptic current of the $m$ neurons in the recurrent network, $D\in \R^{m\times m}$ is a diagonal matrix where diagonal elements are the decay constants, $d_i = \tau_i^{-1}$, $W\in \R^{m\times m}$ is the connection matrix, $\phi(\cdot)$ is a nonlinear transfer function\footnote{We later need it to be non-polynomial in Theorem \ref{thm:main}},  $I\in \R^m$ is the external input, $\sig\in \R^{m\times l}$ is the diffusion coefficient and $\m{B}_t$ is an $l$-dimensional standard Wiener process. The diffusion term can be interpreted as input from other brain areas (due to the large number of incoming connections, the assumption of Gaussianity can be justified by the central limit theorem \citep{gerstner_neuronal_2014}). We assume that $\tet = \set{D, W, I}$ are tunable parameters through biological learning.
To show the limited expressivity of $\msc$, we have the following corollary from the Hilbert projection theorem showing that $\msc$ is only able to approximate functions in a finite-dimensional function space.

\begin{proposition} \label{prop:lim_exp}
    Let $H:\R^m\to \R^m$, a function in the Hilbert space $L_2(\R^m,\R^m;p)$. Let $\Pi$ be the orthogonal projection operator onto the vector subspace 
    \begin{equation*}
      E = \set{A\m{v} + B\phi(\m{v}) + I | A,B\in \R^{m\times m}, I\in \R^{m\times 1}}
    \end{equation*}
    If $\Norm{H - \Pi H}>0$, then $\Inf_\tet\Norm{H(\m{v}) - \msc(\m{v})} \geq \Norm{(1 - \Pi) H}>0$. 
\end{proposition}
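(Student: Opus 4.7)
The strategy is to recognize that the parameterized family $\{\msc(\m{v}) : \tet = (D, W, I)\}$ lies inside the closed subspace $E$, and then to invoke the Hilbert projection theorem. This reduces the proposition to a one-line bookkeeping step.

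First, I would expand $\msc(\m{v}) = D(-\m{v} + W\phi(\m{v}) + I) = (-D)\m{v} + (DW)\phi(\m{v}) + (DI)$. Taking $A = -D$, $B = DW$, and constant vector $I' = DI$ exhibits $\msc$ in the form $A\m{v} + B\phi(\m{v}) + I'$ required for membership in $E$. Although $D$ is constrained to be diagonal (so not every $A \in \R^{m\times m}$ is attained as $\tet$ varies), the map $\tet \mapsto \msc$ still only produces elements of $E$, so $\{\msc\}_\tet \subseteq E$.

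Second, $E$ is spanned by the coordinate functions $\m{v}\mapsto v_i\m{e}_j$, the nonlinear coordinates $\m{v}\mapsto \phi(\m{v})_i\m{e}_j$, and the constant vectors $\m{e}_j$, hence $\dim E \leq 2m^2 + m$. As a finite-dimensional subspace of the Hilbert space $L_2(\R^m,\R^m;p)$, $E$ is automatically closed, and the Hilbert projection theorem guarantees a unique best approximation $\Pi H \in E$ to any $H\in L_2$, with
\begin{equation*}
  \Inf_{f\in E}\Norm{H - f} \;=\; \Norm{H - \Pi H} \;=\; \Norm{(1-\Pi) H}.
\end{equation*}
Combining this with the inclusion $\{\msc\}_\tet \subseteq E$ yields
\begin{equation*}
  \Inf_\tet \Norm{H(\m{v}) - \msc(\m{v})} \;\geq\; \Inf_{f\in E}\Norm{H - f} \;=\; \Norm{(1-\Pi) H} \;>\; 0,
\end{equation*}
where the final strict inequality is precisely the standing hypothesis.

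The only mildly delicate point is ensuring that the spanning functions of $E$ (namely $v_i$ and $\phi(\m{v})_j$) sit inside $L_2(p)$ so that $\Pi$ is well-defined on this finite-dimensional span; for the densities considered here and the non-polynomial nonlinearities used later in Theorem \ref{thm:main} this is a mild and standard moment assumption. Apart from this, there is no real obstacle — the result is essentially a direct corollary of the projection theorem once one observes the subspace structure underlying the drift term $\msc$.
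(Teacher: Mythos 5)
Your proposal is correct and follows essentially the same route as the paper's own proof: observe that $\{\msc\}_\tet$ is contained in the finite-dimensional (hence closed) subspace $E$, apply the Hilbert projection theorem to get $\Inf_{f\in E}\Norm{H-f} = \Norm{(1-\Pi)H}$, and conclude by monotonicity of the infimum over the smaller set. Your explicit expansion showing the inclusion and your remark on the integrability of the spanning functions are slightly more careful than the paper's one-line justification, but the argument is the same.
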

Proof of the proposition is given in Appendix \ref{app:2}. The proposition says that no matter how the parameter of $\msc$ is tuned, the difference between $\msc$ and the target function cannot approach 0, and the lower bound of the error is given by the norm of the component in the target function that is orthogonal to the finite-dimensional function space $E$. Therefore, synaptic current dynamics have a limited ability to match the score function and hence a limited ability to sample from complex probability distributions under the strict orthogonality constraint in Section \ref{sec:score}. The conclusion holds even if we let the diffusion coefficient $\sig$ be tunable. Since $\Sig^{-1}$ is linear,  $\set{\msc}_\tet$ share the same set of basis functions as $\set{\Sig^{-1}\msc}_{\tet, \sig}$. As we will see in the next section, the firing rate dynamics of a recurrent neural circuit with a separate output layer (a reservoir-sampler network) and a learnable diffusion coefficient $\sig$ can sample from arbitrary stationary distributions.

\subsection{Firing-rate dynamics could be a universal sampler} \label{subsec:fr}
\subsubsection{Sampler-only networks}
In this section, we consider the firing rate dynamics \citep{dayan_theoretical_2005} (cf. eq. 7.11) that describe a recurrent neural circuit in terms of the firing rates of the neurons. We first consider the sampler-only network: 
\begin{equation}\label{eq:3}
    d\m{r} = D(-\m{r} + \phi(\wrec\m{r}+I))dt + \sigma dB_t := \mfr(\m{v}) dt + \sig d\m{B}_t.
\end{equation}
Here, $\mfr(\m{v}) = D(-\m{r} + \phi(\wrec\m{r}+I))$ and $D$ is a diagonal matrix with decay constants as its diagonal elements. The stationary solution of the corresponding Fokker-Planck equation satisfies 
\begin{equation} \label{eq:fp_fr}
    \nabla \cdot (\Sig\nabla p - p\mfr) = 0
\end{equation}
where $\Sig := \f{1}{2} \sigma \sigma^T$ is symmetric positive definite (SPD) and $\mfr = D(-\m{r} + \phi(\wrec\m{r}+I))$. Here $\tet = \set{D, W_\mathrm{rec}, I}$ are tunable parameters. If $\Sig$ is invertible, equation \eqref{eq:fp_fr} becomes $\nabla\cdot (\Sig(\nabla p - p \Sig^{-1} \mfr))$. Therefore if $\Sig^{-1}\mfr = \nabla \log p$ is a gradient field, then the stochastic dynamics of the recurrent neural network described by equation \eqref{eq:3} have a stationary distribution $p^*$ such that the score of this distribution $\nabla \log p^* = \Sig^{-1}\mfr$. 

Compared to the synaptic current dynamics where we could only have functional basis $\m{v}_i$ and $\phi(\m{v})_i$, we can now freely choose the functional basis spanning $\set{\mfr}_\tet$ depending on $\wrec$ and $I$, but since there is no linear term before the nonlinear transformation $\phi$, the function set $\set{\mfr}_\tet$ is not closed under addition. If we view $\Sig^{-1} \mfr$ as a neural network with one hidden layer, the number of hidden neurons must be the same as the input dimension, and the diffusion matrix $\Sig$ (hence $\Sig^{-1}$) is restricted to be an SPD matrix. Therefore, we do not get universal approximation power from $\Sig^{-1}\mfr$, and combined with results in Section \ref{subsec:synap}, we see that an RNN by itself does not intrinsically produce samples from arbitrary distributions. As we will see below, if we let a population of output neurons receive inputs from a large reservoir of recurrently connected neurons (a reservoir-sampler network), we are able to obtain samples from complex distributions from the output neurons.

\subsubsection{Reservoir-sampler networks}
Now we consider the reservoir-sampler network where there is a linear readout layer $\wo\in \R^{m\times n}$ of the reservoir whose dynamics is given by equation \eqref{eq:3} (see also the upper row of Figure \ref{fig:archi}). As a special case of Ito's lemma, we have
\begin{equation} \label{eq:4}
  \begin{split}
    \wo d\m{r} = d \wo \m{r} &= \wo \mfr dt + \wo\sigma dB_t\\
    &= (-\wo D\m{r} + \wo D\phi(\wrec \m{r} + I)) dt + \wo\sigma dB_t.
  \end{split}
\end{equation}
Now we assume that $\wrec$ is the product of $\twr$ and $\wo$, i.e. $\wrec = \twr \wo$ and $D = \alp \m{I}$ is a scaled identity matrix. If we denote the output of the recurrent neural network as $\m{x}:= \wo \m{r}\in \R^m$, we derive the following stochastic dynamics for output neurons: 
\begin{equation} \label{eq:out_fp}
  \begin{split}
    d\m{x} &= (-\alp \m{x} + \alp \wo \phi(\wt{W}_{rec} \m{x}+I)) dt + \wo \sigma dB_t := \tmfr(\m{x}) dt + \tilde{\sig} dB_t\\
  \end{split}
\end{equation}
where $\tmfr(\m{x}) = (-\alp \m{x} + \alp \wo \phi(\twr \m{x}+I))$ and $\tilde{\sig} = \wo \sigma $. Therefore in order for the output neurons to sample from a stationary distribution $p$, we need $s_\beta(\m{x})=(\f{1}{2} \tilde{\sig} \tilde{\sig}^T)^{-1}\tmfr(\m{x}):= \wt{\Sig}^{-1} \tmfr(\m{x})$ to match the score $\nabla \log p(\m{x})$. Here $\beta = \set{\twr, \wo, I, \sig}$ are tunable parameters. Additionally, we assume that $\tmfr$ is $\m{0}$ outside a reasonable range of $\m{x}$. This assumption is used to prevent $s_\beta(\m{x})$ from behaving wildly outside the bounded region on which $s_\beta(\m{x})$ has the expressivity to match the score. The following theorem proves that with a large enough number of reservoir neurons, the score-matching loss can be arbitrarily small. The proof is given in Appendix \ref{app:main}.
\begin{theorem} \label{thm:main}
Suppose that we are given a probability distribution with continuously differentiable density function $p(\m{x}): \R^m\to \R^+$ and score function $\nabla \log p(\m{x})$ for which there exist constants $M_1, M_2, a, k>0$ such that 
\begin{align}
    p(\m{x}) &< M_1 e^{-a \Norm{\m{x}}}\\
    \Norm{\nabla \log p(\m{x})}^2 &< M_2 \Norm{\m{x}}^k
\end{align}
when $\Norm{\m{x}}>L$ for large enough $L$. Then for any $\eps>0$, there exists a recurrent neural network whose firing-rate dynamics are given by \eqref{eq:out_fp}, whose recurrent weights, output weights, and the diffusion coefficient are given by $\wrec\in \R^{n\times n}$ of rank $m$, $\wo \in \R^{m\times n}$, and $\sig\in \R^{n\times m}$ respectively, such that, for a large enough $n$, the score of the stationary distribution of the output units $s_\tet(\m{x})$ satisfies $\bbE_{\m{x}\sim p(\m{x})}[\Norm{\nabla \log p(\m{x}) - s_\tet(\m{x})}^2]<\eps$. 
\end{theorem}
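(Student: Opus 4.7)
The plan is to reduce the theorem to the classical universal approximation theorem for shallow feedforward networks with a non-polynomial activation, and to handle the unbounded domain by a tail-truncation argument driven by the decay hypotheses on $p$ and $\nabla\log p$.

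I would first eliminate the factor $\tilde{\Sigma}^{-1}$ by exploiting the freedom in $\sigma$. Choose $W_o\in\mathbb{R}^{m\times n}$ with full row rank $m$ and set $\sigma = \sqrt{2}\, W_o^{+}$ (Moore-Penrose pseudoinverse). Then $W_o\sigma = \sqrt{2}\,I_m$, so $\tilde{\Sigma} = I_m$ and $s_\beta(\mathbf{x}) = \tilde{F}_{FR}(\mathbf{x})$. The score-matching loss now reads
\begin{equation*}
\mathbb{E}_{p}\big[\,\|\nabla\log p(\mathbf{x}) + \alpha \mathbf{x} - \alpha W_o\,\phi(\tilde{W}_{rec}\mathbf{x}+I)\|^{2}\,\big],
\end{equation*}
so up to a factor of $\alpha$ the problem becomes one of approximating the continuous target field $g(\mathbf{x}) := \mathbf{x} + \alpha^{-1}\nabla\log p(\mathbf{x})$ by a shallow MLP $W_o\phi(\tilde{W}_{rec}\mathbf{x}+I)$ in $L^{2}(p)$.

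Next I would split $\mathbb{R}^{m}$ into a large closed ball $K_R=\{\|\mathbf{x}\|\le R\}$ and its complement. By the stated hypothesis that $\tilde{F}_{FR}\equiv 0$ outside a compact set (taken inside $K_R$), one has $s_\beta \equiv 0$ on $K_R^c$, so the tail contribution to the loss is exactly $\int_{K_R^c}\|\nabla\log p\|^{2}p\,d\mathbf{x}$. Using the hypotheses $p(\mathbf{x})<M_1 e^{-a\|\mathbf{x}\|}$ and $\|\nabla\log p(\mathbf{x})\|^{2}<M_2\|\mathbf{x}\|^{k}$ for $\|\mathbf{x}\|>L$, the integrand is bounded by $M_1 M_2 \|\mathbf{x}\|^{k}e^{-a\|\mathbf{x}\|}$, which is integrable on $\mathbb{R}^{m}$; hence $R>L$ can be chosen so that this tail is below $\epsilon/2$. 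On the compact set $K_R$, $g$ is continuous (since $p$ is $C^{1}$ and strictly positive). I would then invoke the universal approximation theorem for shallow networks with non-polynomial continuous activation (Leshno–Lin–Pinkus–Schocken 1993; Pinkus 1999), applied componentwise, to produce $n$ and weights so that $\sup_{\mathbf{x}\in K_R}\|g(\mathbf{x}) - W_o\phi(\tilde{W}_{rec}\mathbf{x}+I)\|<\delta$. Choosing $\delta$ small enough that $\alpha^{2}\delta^{2}\,\mathbb{P}_p(K_R) < \epsilon/2$ handles the interior. Adding the two pieces gives the desired $<\epsilon$ bound, and $\mathrm{rank}(W_{rec})=m$ follows since $W_{rec}=\tilde{W}_{rec}W_o$ and both factors can be arranged to have rank $m$.

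The main obstacle is the compatibility of the three simultaneous requirements on $(W_o,\tilde{W}_{rec},\sigma)$: (i) $W_o$ must have full row rank $m$ (for the diffusion reduction in Step 1), (ii) the shallow network $W_o\phi(\tilde{W}_{rec}\mathbf{x}+I)$ must uniformly approximate $g$ on $K_R$, and (iii) $\tilde{F}_{FR}$ must vanish outside a compact set. The universal approximation theorem by itself does not deliver $W_o$ of full row rank, and a vanilla sum-of-ridges construction generally does not vanish outside a compact region. I would address (i) by approximating each coordinate $g_i$ with its own disjoint block of hidden units, so that the rows of $W_o$ have disjoint supports and are trivially linearly independent, or by appending a small block of dummy units whose output weights complete a basis without disturbing the uniform bound. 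Issue (iii), which is part of the theorem's setup rather than something to be proved, can be reconciled with (ii) by a standard localization trick: build the ridges from bump-like combinations (e.g., differences $\phi(\cdot)-\phi(\cdot-c)$ when $\phi$ is sigmoidal, or standard ReLU bumps) so that the entire parameterized field has support inside a slightly larger ball $K_{R+1}$ on which the approximation on $K_R$ is preserved. Making the cancellation $-\alpha\mathbf{x}+\alpha W_o\phi(\cdots)$ vanish exactly outside, rather than approximately, requires a careful gluing argument and is the most technically delicate part of the proof.
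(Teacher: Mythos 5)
Your proposal is correct and follows essentially the same route as the paper's proof: choose $\sig$ to be a (scaled) pseudoinverse of $\wo$ so that $\wt{\Sig}$ becomes the identity, bound the tail integral $\int_{\|\m{x}\|>R}\|\nabla\log p\|^2 p\,d\m{x}$ using the exponential decay of $p$ against the polynomial growth of $\|\nabla\log p\|^2$ (the paper does this quantitatively via the upper incomplete Gamma function), and apply the Leshno--Lin--Pinkus--Schocken theorem on the compact ball to the target $\m{x}+\nabla\log p(\m{x})/(2\alp)$. The two technicalities you flag are handled in the paper by footnotes --- full row rank of $\wo$ via an arbitrarily small perturbation, and the vanishing of $\tmfr$ outside $U_C$ taken as a standing assumption (only used to equate the tail integrand with $\|\nabla\log p\|^2 p$) --- so your block-construction and localization remarks are a more careful treatment of the same points rather than a different proof.
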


This theorem says that for any realistic data distribution with a smooth positive density function, there always exists a reservoir of recurrently-connected neurons whose output units give samples from a distribution whose score function approximates that of the data distribution to arbitrary precision. Given the bound on the score matching error, \citet{block_generative_2022} (cf. Theorem 13)  gives bounds in Wasserstein 2-distance between the stationary distribution of the trained recurrent dynamics and the true data distribution. It is also worth noting that the recurrent weight matrix of the neural circuit in the theorem is of low-rank, so regardless of how many neurons there are in the reservoir, we can always find a low-rank recurrent weight matrix such that the output neurons sample from a correspondingly low-dimensional distribution. 

\subsection{An efficient RNN weight learning algorithm} \label{subsec:training}
The training procedure is derived from the proof of Theorem \ref{thm:main}. The main idea is to first train an auxiliary neural network with one hidden layer, then transfer the weights of the auxiliary neural network to the weights of the recurrent network that we are considering. More specifically, we first optimize an auxiliary feedforward neural network with one hidden layer $\wo \phi(\twr \mx + I)$ using backpropagation with the denoising score matching loss \eqref{eq:DSM} such that 
\begin{equation}
  2\alp(\wo \phi(\twr \mx + I) -\mx)\approx \nabla \log p(\mx).
\end{equation} 
Then we can calculate the diffusion coefficient $\sig = \wo^T (\wo\wo^T)^{-1}$ and the real recurrent weights $\wrec = \twr \wo$ accordingly. The noise magnitude added to the data samples is decreased exponentially over the entire training period. Figure \ref{fig:DP} illustrates how the network can gradually learn the score function during training. We refer the readers to Appendix \ref{app:train_code} for more details.

Note that this method of using an auxiliary neural network is much more computationally efficient than directly matching the score of the stationary distribution of the dynamics \eqref{eq:out_fp}, for which the score function, which involves the matrix inverse $(\tilde{\sig} \tilde{\sig}^T)^{-1}$, needs to be recomputed at each optimization step. Furthermore, since the entire training procedure is equivalent to training a feedforward network with one hidden layer, it sidesteps the often challenging temporal computations associated with the Backpropagation Through Time (BPTT) algorithm used to train deterministic RNNs. 

Although we assumed that the divergence-free field $G = 0$ for the purpose of theoretical analysis, in practice, fast sampling is a major concern for implementing sampling-based inference models of the brain \citep{hennequin_fast_2014,aitchison_hamiltonian_2016, echeveste_cortical-like_2020,  masset_natural_2022}, and reversible\Ecomment{clarify relationship between reversibility and $G=0$} stochastic dynamics, i.e. if $G = 0$, are known for their slow sampling speed. Fortunately, there is a straightforward way to extend our framework and train RSNs to implement irreversible dynamics with a non-zero divergence-free field $G$.  This results in improvements in sampling speed (see Appendix \ref{app:accel} for details).

\section{Experimental results} 
\label{sec:exp}
In this section, we present the results from two tasks. First, we try to let the recurrent neural network learn and generate samples from a 1-D double-mode Gaussian mixture distribution and a 2-D mixture of heavy-tailed Laplace distributions. For the second task, we explore whether a reservoir-sample network with firing rate dynamics is able to sample from the distribution of internal representations computed from PCA filtering of image inputs. All dynamics are simulated with the Euler-Maruyama method. See Appendix \ref{app:train_detail} for hyperparameters used and other training details.

\subsection{Learning mixture distributions} \label{subsec:bimodal}

We consider a 1-D Gaussian mixture distribution whose density function is the average of two Gaussian distributions centered at $\pm 1$, i.e. $p_{\mrm{data}}(x) = \f{1}{2} (\cN(-1, 0.25) + \cN(1, 0.25))$. We artificially generate 10000 data points from this distribution and minimize the denoising score-matching loss:
\begin{equation}
  \cL(\wo, \twr, I) = \bbE_{p_{\mrm{data}}(x)} \bbE_{\tx \sim \cN(x, \sig^2)}\left[\Norm{2\alp(\wo \phi(\twr x + I) -x) + \f{\tx - x}{\sig^2}}^2 \right].
\end{equation}
We use $\phi(\cdot) = ReLU(\cdot)$ and set $\alp = 1/2$. See Figure \ref{fig:DP} for the numerical results. It is worth noting that if we use $\tanh$ as the transfer function, the sampler-only networks are able to learn the score function perfectly, as the score function of the Gaussian mixture distribution we considered is exactly spanned by $f_1(x) = x$ and $f_2(x) = \tanh(2x)$. See Appendix \ref{app:tanh_failure} for an example where sampler-only networks fail to learn the score function even if hyperbolic tangent nonlinearity is used.

\begin{figure}[htb!]
  \centering
  \includegraphics[width=\textwidth]{./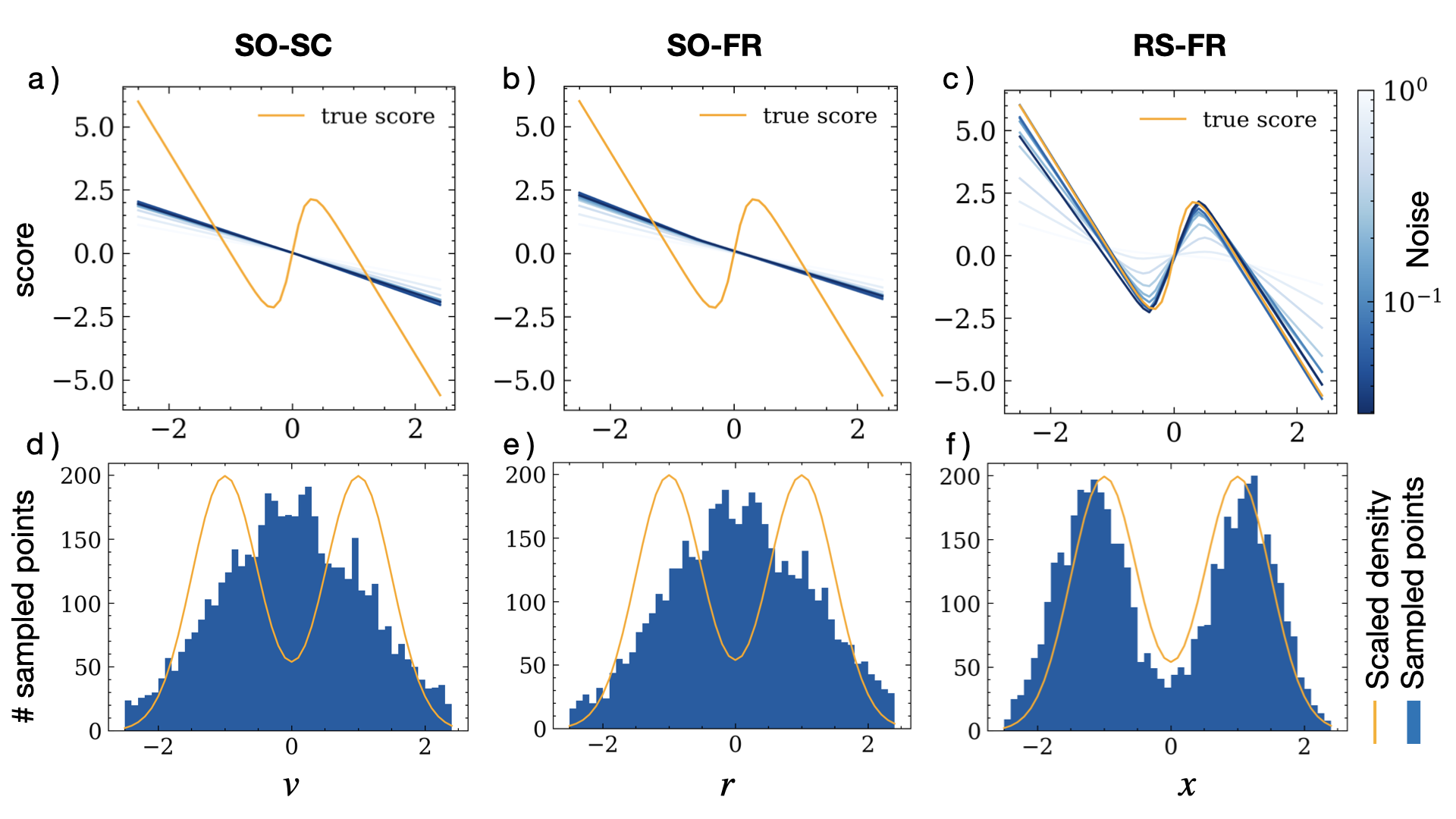}
  \caption{\textbf{Bimodal distribution sampling results.} The 3 tractable cases shown are Sampler-only (SO) networks with both synaptic current (SC) dynamics and firing rate (FR) dynamics and Reservoir-sampler (RS) networks with FR dynamics, which are named SO-SC, SO-FR, and RS-FR respectively. a-c) The score function learned compared to the true score function (orange curve) as we gradually decrease the noise level (the darker the line, the lower the noise level). We see that RS-FR is capable of perfectly fitting the score function, while SO-SC and SO-FR are only able to fit the score function with piecewise linear functions when using the ReLU transfer function. d-f) Histogram of sampled points, and the (scaled) density function of the target distribution. Again the reservoir-sampler network is able to generate samples whose distribution matches the target distribution, while the sampler-only network is not able to do so due to the incorrectly matched score function.}
  \label{fig:DP}
  \end{figure}

Next, we show that the model can learn mixtures of heavy-tailed distributions that are evident in natural image statistics and the neural representations in the primary visual cortex \citep{simoncelli_natural_2001, olshausen_emergence_1996}. We trained the Reservoir-Sampler network with FR dynamics (RS-FR) model on 20000 sampled data points from a 2-D Laplace mixture distribution, whose density is given by
$
    p_{\mrm{data}}(\mx) = \f{1}{2}\left(\mrm{Lap}\left(\m{0}, \begin{bmatrix}
        1 & 0.9 \\
        0.9 & 1 
    \end{bmatrix}\right) + \mrm{Lap}\left(\m{0}, \begin{bmatrix}
        1 & -0.9 \\
        -0.9 & 1 
    \end{bmatrix}\right)
    \right)
$, where $\mrm{Lap}$ denotes the multivariate Laplace distribution. The model successfully learned the probability density of the mixture distribution (Figure \ref{fig:lap} left vs. middle), and captured the heavy tails of the distribution as measured by the kurtosis (Figure \ref{fig:lap} right).

\begin{figure}
    \centering
    \includegraphics[width=\textwidth]{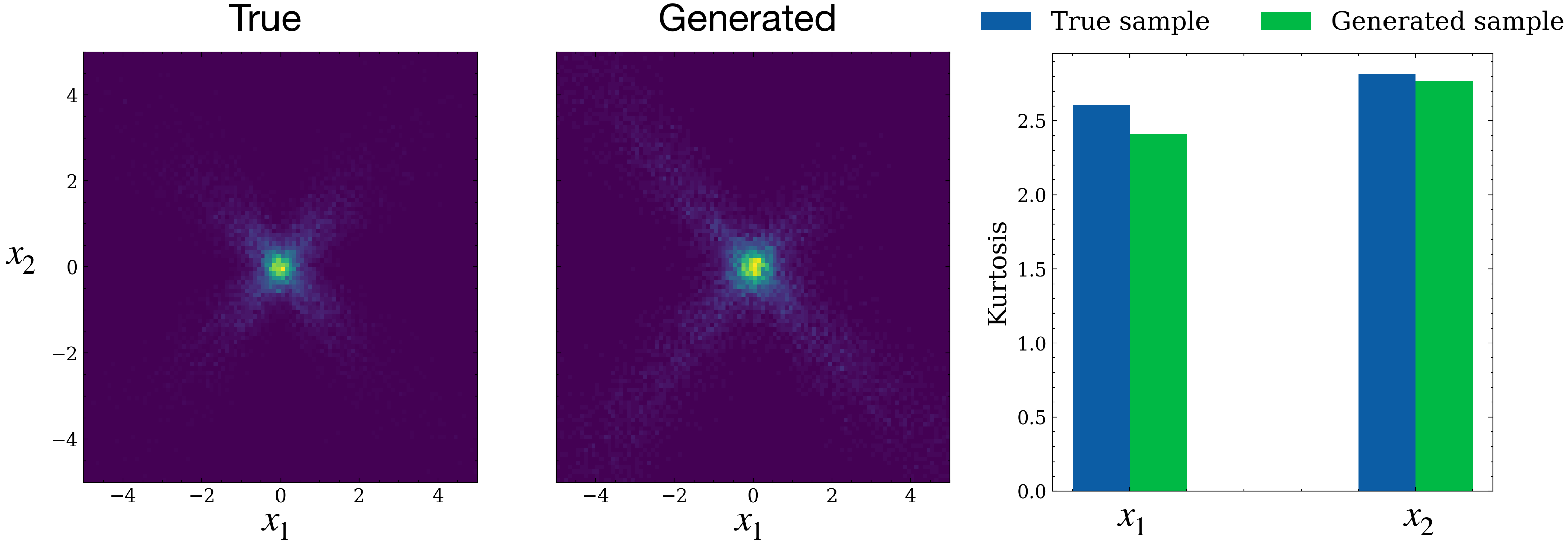}
    \caption{\textbf{RS-FR model learning a mixture of 2-D Laplace distribution.} Left to right: sample density from the true distribution (brighter color denotes higher density); sample density from the learned distribution; marginalized kurtosis of each dimension from the true and learned distribution.}
    \label{fig:lap} 
\end{figure}

\subsection{MNIST generation task} \label{subsec:MNIST}
We also tested the sampling ability of our model using the MNIST dataset \citep{lecun_gradient-based_1998} which contains 60,000 handwritten digits from 0 to 9. We projected MNIST images to a 300-D latent space spanned by the first 300 principal components and trained the weights of the recurrent neural network as described in Section \ref{subsec:training} so that the RNN can sample from the latent distribution. To test the model, we generated images by applying inverse PCA projection to samples generated by the model. The schematics and generated images are shown in Figure \ref{fig:MNIST}. Note that since we are essentially using a shallow network to match the score, we should not expect comparable performance to generative models that use deep ANNs. Our main goal is to illustrate that the reservoir-sampler network using firing rate dynamics is qualitatively more expressive than other traditional neural sampling models (Appendix \ref{app:equivalence}). Finally, we also note that it is highly nontrivial for recurrent neural dynamics to complete such a generative task, and to the best of our knowledge, no previous work has achieved such results. 
\begin{figure}[htb!]
  \centering
  \includegraphics[width=\textwidth]{./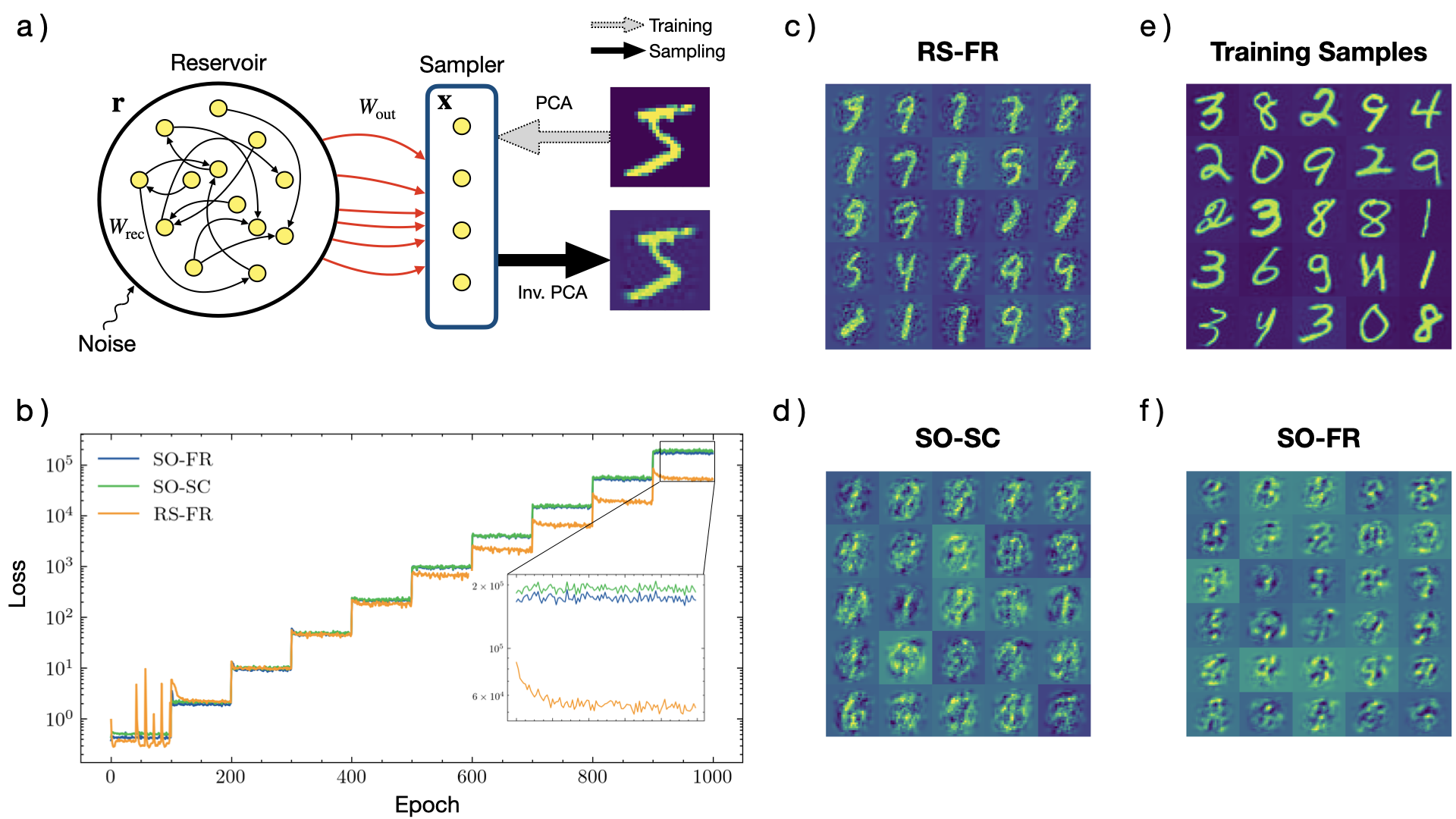}
  \caption{\textbf{Learning to sample the MNIST image distribution.} a) An MNIST image is projected to a 300-D latent space (orange circle) spanned by the first 300 principal components using PCA. The sampler learns to sample from the distribution of this latent space and generates images by applying inverse PCA projection to these samples. The diagram illustrates the RS-FR model. b) The loss curves for 3 different models during training. Every 100 epochs the noise level added to the training samples is reduced (Appendix \ref{app:train_detail}), and the noise increases to a higher value because the score matching loss magnitude depends on the noise level. As shown in the inset, the loss of the RS-FR model decreases throughout the training process when using the lowest fixed noise level. Meanwhile, the losses of the other two models remain unchanged.  c-f) The images generated for the 3 models compared to the digit images generated from latent training samples.}
  \label{fig:MNIST}
\end{figure}

\section{Discussion}\label{sec:diss}
 From the perspective of functional analysis and SDE theory, we prove that under the strict orthogonality constraint, it is essential for neural circuits to have a drift term that has the expressivity to approximate complex score functions, despite the fact that the dynamics do not have to exactly implement the Langevin dynamics. We investigated whether a population of neurons can sample from an arbitrary distribution directly and proved that the drift term of the synaptic current dynamics can only sample from a finite-dimensional function space. Although the drift term of the firing rate dynamics can approximate functions spanned by different basis functions, the number of basis functions is limited. To address this problem, we proposed the reservoir-sampler network for firing rate dynamics. We found that with learnable diffusion coefficients and a sufficiently large reservoir of hidden neurons, the output neurons described using the firing rate dynamics are able to sample from arbitrary data distributions. Our results partly answer the question of what architecture recurrent neural circuits need so that they are able to sample from complex data distributions.  

 Our analysis and empirical experiments affirm the universality of stochastic RNNs. However, this universality comes with limitations. First, we have only analytically shown the existence of weights that enable sampling from complex data distributions; there is no guarantee that one will find such weights through backpropagation. Additionally, in order to obtain the tunable diffusion coefficient during training, a matrix inverse is needed (likewise in the FORCE algorithm \citep{sussillo_generating_2009}). Further, the question of how biological circuits compute the specific gradient and implement the denoising score-matching algorithm remains an open question. Moreover, in our current formulation, we are only able to approximate the score function with a shallow network with one hidden layer. Our preliminary experiments show that one-hidden-layer RSNs cannot readily approximate high-dimensional heavy-tailed distributions (\textit{e.g.}, those of overcomplete sparse coding representations \citep{olshausen_emergence_1996}). It is unclear if this is because of an insufficient number of reservoir neurons. Due to the limitation of the GPU memory, we did not try a higher number of reservoir neurons.


 Our model differs from recent diffusion models \citep{ho_denoising_2020, song_score-based_2021}, which can be seen as time-inhomogeneous SDEs, and has the advantage of being able to run indefinitely in time, making it a suitable candidate for modeling spontaneous activity in the brain. Moreover, while \citet{song_generative_2019} optimizes the denoising score matching loss at different noise levels jointly, we adopt a sequential learning procedure by gradually decreasing the noise level of the training samples. This procedure is more aligned with the developmental processes involved in forming visual representations in the infant brain, where the distribution of visual representations is thought to be noisier (less linearly separable) initially \citep{bayet_temporal_2020}. Our study therefore serves as a starting point for building a mechanistic model for probabilistic computation in the brain that has similar generative power to current AI generative models. 
 
 Biologically, there are multiple ways to interpret the reservoir neurons and sampler neurons in an RSN. First, reservoir and sampler neurons could be seen as different types of neurons in a single brain area, where the dynamics of sampler neurons converge quickly to the equilibrium point. Second, even more straightforwardly, the sampler neurons could be seen as a more separate set of neurons located downstream of the reservoir. We also wish to suggest an alternative interpretation. Biological neural networks are known to have non-synaptic signaling networks (e.g. pervasive neuropeptidergic signaling \citep{bargmann2013connectome}, extensive aminergic signaling \citep{bentley2016multilayer} or potential extrasynaptic signaling \citep{yemini2021neuropal}) in addition to the synaptic connectivity that is typically modeled (i.e., via connection weights).  We suggest that it is possible that the computations of the reservoir may be implemented by non-synaptic networks, and then ``read out'' by certain neurons' spikes.  This possibility is supported by the recent finding of a low correlation between functional activity and the synaptic (``structural'') connectome in C.elegans \citep{yemini2021neuropal}.  Moreover, if we only take the structural connectome into consideration, then the resulting model of C. elegans would correspond to the sampler-only network, which, as our theory predicts, will have limited sampling capability. 

    
    

\section{Conclusion}\label{sec:conclusion}
In this paper, we explore how a recurrent neural circuit can sample from complex probability distributions, an important functional motif in 
probabilistic brain models. We start from a basic assumption that the recurrent neural circuit could be described as an SDE. We show that a recurrently connected neural population by itself has a limited capability to implement stochastic dynamics that can sample from complex data distributions. In contrast, we prove that firing rate dynamics of the output units of a recurrent neural circuit (a reservoir-sampler network) can sample from a richer range of probability distributions. These theoretical results, together with our preliminary experimental results, provide a sufficient condition for neural sampling-based models to exhibit universal sampling capability. Our results therefore provide a foundation for the next generation of sampling-based probabilistic brain models that can explain a wider range of cognitive behaviors.

\section{Acknowledgements}
We are thankful to Profs. Hong Qian, Bamdad Hosseini, and Edgar Walker for their guidance and insight with this project.  We gratefully acknowledge the support of the grant NIH BRAIN R01 1RF1DA055669.

\bibliographystyle{plainnat}
\bibliography{references}


\newpage
\appendix
\section{Helmholtz-Hodge decomposition and proof of Proposition \ref{prop:strict_ortho}} \label{app:strict_ortho}
First we give a proof for Proposition \ref{prop:strict_ortho}:
\begin{prop*}
  Let $p$ be the stationary distribution of the neural dynamics, and the diffusion matrix be the identity matrix. If the DF field $G$ is \textit{strictly orthogonal} to the gradient field $\nabla p$, meaning that $G(\m{v})\cdot \nabla p(\m{v}) = 0$ for all $\m{v}$,  then the drift term $F_\tet(\m{v})$ can be written as the sum of a divergence-free field $p^{-1}G$ and a gradient field $\nabla \log p$. 
\end{prop*}
\begin{proof}
    Since $p$ is the stationary distribution of the neural dynamics, from the Fokker-Planck equation we have
    \begin{equation}
        F_\tet(\m{v}) = p^{-1}G + \nabla \log p
    \end{equation}
    where $G$ is divergence free. Now 
    \begin{align*}
        \nabla \cdot (p^{-1} G) &= \nabla p^{-1}\cdot G + p^{-1}  \nabla\cdot G\\
        &= \f{\nabla p \cdot G}{p^2} & \text{G is divergence free}\\
        &= 0. & \text{G and $\nabla p$ is pointwise orthogonal }
    \end{align*}
    Therefore, $p^{-1}G$ is also divergence free.
\end{proof}

Now we give a statement of the Helmholtz-Hodge decomposition, which is an application of the Hodge decomposition theorem to vector calculus on $\R^n$ \citep{bhatia_helmholtz-hodge_2013, majda_vorticity_2002}.
\begin{theorem} [\textbf{Helmoholtz-Hodge Decomposition}] \label{thm:HHD}
    A smooth vector field $\m{V}$ on a bounded domain of $\R^n$ can be uniquely decomposed into the sum of a gradient field $\nabla u$ that is normal to the boundary, a divergence-free field $\m{v}$ that is parallel to the boundary and a harmonic field $\m{h}$, i.e. $\m{V} = \nabla u + \m{v} + \m{h} $. Moreover, the three components are orthogonal to each other in the $L_2$ sense. 
\end{theorem}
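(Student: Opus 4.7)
The plan is to prove orthogonality first (which will yield uniqueness essentially for free), and then construct the three components in succession by solving elliptic boundary-value problems.

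I would begin with orthogonality, which follows from integration by parts together with the boundary and differential constraints on each component. For the pair $\langle\nabla u,\m{v}\rangle$ I would write
\[\int_\Omega \nabla u\cdot\m{v}\,dx=\int_{\partial\Omega}u(\m{v}\cdot\nu)\,dS-\int_\Omega u(\nabla\cdot\m{v})\,dx,\]
and note that $\m{v}\cdot\nu=0$ on $\partial\Omega$ (parallel boundary condition) together with $\nabla\cdot\m{v}=0$ (divergence-free) forces both terms to vanish. The pairing $\langle\nabla u,\m{h}\rangle$ is handled the same way, since $\m{h}$ is divergence-free with a compatible boundary condition. For $\langle\m{v},\m{h}\rangle$, I would use that $\m{h}$ is curl-free, represent it locally as a gradient $\nabla\psi$, and repeat the integration-by-parts argument, with an analogous Hodge-theoretic argument handling non-simply-connected domains.

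Next, for existence, I would extract the gradient part by solving the Dirichlet Poisson problem
\[\Delta u=\nabla\cdot\m{V}\text{ in }\Omega,\qquad u=0\text{ on }\partial\Omega.\]
Standard elliptic regularity on a smooth bounded domain gives a smooth $u$, and the Dirichlet condition makes $u$ constant along connected components of $\partial\Omega$, so the tangential component of $\nabla u$ vanishes there; hence $\nabla u$ is normal to the boundary. The residual $\m{W}:=\m{V}-\nabla u$ is then divergence-free. I would further project $\m{W}$ onto the space of divergence-free fields tangent to $\partial\Omega$, obtaining $\m{v}$ as the codifferential of a suitable potential (in $n=3$ a vector potential $\m{A}$ with $\m{v}=\nabla\times\m{A}$; in general $n$, a coexact $(n-2)$-form), and then setting $\m{h}:=\m{W}-\m{v}$, which is divergence-free, curl-free, and boundary-compatible, hence harmonic.

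Uniqueness is immediate from orthogonality: if $\nabla u_1+\m{v}_1+\m{h}_1=\nabla u_2+\m{v}_2+\m{h}_2$, rearranging gives a sum of three mutually $L^2$-orthogonal fields that equals zero, and taking $L^2$ inner products against each slot forces the individual differences to vanish. The main obstacle in the above is the construction of the divergence-free tangential component: in dimension three it reduces to solving $\nabla\times\m{A}=\m{W}$ with an appropriate gauge and boundary conditions ensuring that $\nabla\times\m{A}$ is itself tangential, while in general dimension it requires the full Hodge decomposition for manifolds with boundary, where the finite-dimensionality of the harmonic space (isomorphic to a relative de Rham cohomology group of $\Omega$) is what guarantees both solvability and the well-posedness of the remaining projection.
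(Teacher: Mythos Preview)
The paper does not actually prove this theorem: it is stated as a known result with citations to \cite{bhatia_helmholtz-hodge_2013, majda_vorticity_2002}, and then used as a black box. So there is no ``paper's own proof'' to compare against.

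That said, your sketch is the standard argument one finds in the cited references (solve a Poisson problem with Dirichlet data to extract the gradient part, then project the divergence-free remainder using a vector/form potential, with orthogonality and hence uniqueness coming from integration by parts against the imposed boundary conditions). One small point worth tightening: in your orthogonality argument for $\langle\nabla u,\m{h}\rangle$, the harmonic field $\m{h}$ carries no boundary condition in the stated decomposition, so you cannot simply assert that the boundary term $\int_{\partial\Omega}u(\m{h}\cdot\nu)\,dS$ drops out because of $\m{h}$. The correct reason is that ``$\nabla u$ normal to the boundary'' forces $u$ to be constant on each boundary component, and then the divergence theorem applied to the divergence-free $\m{h}$ kills the remaining boundary integral componentwise. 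With that adjustment, your outline is sound and matches the classical proof.
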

 We note that the form of the decomposition in Proposition \ref{prop:strict_ortho} coincides with the Helmoholtz-Hodge decomposition with a zero harmonic field. Therefore, we can define the orthogonal projection operator $\bbP$ that projects $F_\tet$ onto its gradient component $\nabla \log p$. Under the strict orthogonality constraint that requires G to be orthogonal to $\nabla p$ \textit{pointwise}, to be able to sample from a broad range of distributions with $\nabla \log p$ orthogonal to the boundary, we need that for all $p$, there exists $G_p$ and $\tet$ such that $F_\tet (\m{v})= p^{-1}G_p + \nabla \log p$. Applying the projection operator on both sides, we have
 \begin{equation}
     \bbP F_\tet (\m{v})= \bbP (p^{-1}G_p + \nabla \log p) = \nabla \log p.
 \end{equation}
 Moreover, if the function space $\set{F_\tet}_\tet$ admits a set of basis functions $\set{e_i}_i$, i.e. $F_\tet  = \sum_i c_i e_i $, then 
 \begin{equation}
     \nabla \log p = \bbP F_\tet = \sum_i c_i (\bbP e_i).
 \end{equation}
 This means that the space of $\nabla \log p$ needs to be spanned by $\set{\bbP e_i}_i$, so $\set{F_\tet}_\tet$ should also be spanned by an infinite number of basis functions $\set{e_i}_i$ if we want to sample from a complex class of $\nabla \log p$. 
\section{Neural dynamics not following gradient fields}\label{app:1}
For the meaning of notations, see Section \ref{subsec:synap}.
\begin{proposition} \label{prop:1}
When $m\geq 2$, $\msc(\m{v}):=D(-\m{v} + W \phi(\m{v})+I)$ is a gradient field if and only if $W$ is a zero matrix. 
\end{proposition}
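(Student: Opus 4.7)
The plan is to invoke the standard characterization of gradient vector fields on the simply-connected domain $\R^m$: a $C^1$ map $F:\R^m\to\R^m$ is a gradient field if and only if its Jacobian is symmetric at every point, i.e.\ $\partial_j F_i = \partial_i F_j$ for all $i,j$ and all $\m{v}$. The strategy is to write out this symmetry condition for $\msc$ and show that the only way to satisfy it identically is to kill the recurrent coupling.

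First I would compute the Jacobian componentwise. From $\msc_i(\m{v}) = d_i\bigl(-v_i + \sum_k W_{ik}\phi(v_k) + I_i\bigr)$ one immediately gets
\[
\partial_j \msc_i(\m{v}) = d_i\bigl(-\delta_{ij} + W_{ij}\phi'(v_j)\bigr).
\]
The symmetry condition is trivial on the diagonal ($i=j$), and on any off-diagonal pair $i\neq j$ it reduces to the pointwise identity
\[
d_i\, W_{ij}\, \phi'(v_j) \;=\; d_j\, W_{ji}\, \phi'(v_i) \qquad \text{for all } \m{v}\in\R^m.
\]

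The key observation is a separation-of-variables argument: the left-hand side depends only on $v_j$ while the right-hand side depends only on $v_i$, so both sides must equal a common constant $c$. Since $\phi$ is genuinely nonlinear, $\phi'$ is not constant, so we can pick $a,b$ with $\phi'(a)\neq\phi'(b)$; evaluating the first equation at $v_j=a$ and $v_j=b$ gives $d_i W_{ij}(\phi'(a)-\phi'(b))=0$, and since $d_i>0$ this forces $W_{ij}=0$. Symmetrically $W_{ji}=0$, so every off-diagonal entry of $W$ vanishes. This is exactly where the hypothesis $m\geq 2$ enters: it guarantees the existence of off-diagonal pairs on which to run this argument.

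The main obstacle is then the diagonal entries $W_{ii}$, which the bare symmetry condition does not touch (any purely diagonal $W$ would give a Jacobian that is automatically symmetric). To reach the full conclusion that $W$ is the zero matrix, one needs to exclude self-coupling either by convention (no self-connections, as is standard for recurrent connection matrices) or by an additional structural assumption on $\phi$ that rules out diagonal contributions being absorbed into a potential; I would make whichever assumption the paper implicitly uses explicit at this step and then combine with the off-diagonal result to finish. The converse direction ($W=0\Rightarrow \msc$ gradient) is immediate: $\msc(\m{v}) = D(-\m{v}+I) = \nabla V$ with $V(\m{v}) = \sum_i d_i\bigl(-\tfrac12 v_i^2 + I_i v_i\bigr)$.
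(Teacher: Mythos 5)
Your argument is essentially the paper's: equate mixed partials (symmetry of the Jacobian on the simply connected domain $\R^m$), reduce for $i\neq j$ to the pointwise identity $d_i W_{ij}\phi'(v_j) = d_j W_{ji}\phi'(v_i)$, and use non-constancy of $\phi'$ together with $d_i\neq 0$ to force the off-diagonal entries to vanish; your separation-of-variables phrasing is just a cleaner version of the paper's ``since $v_i$ and $v_j$ are arbitrary.''

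The diagonal worry you raise is real, and it is a gap in the paper's proof rather than in yours. The symmetry condition is vacuous for $i=j$, and a purely diagonal $W$ with nonzero entries \emph{does} yield a gradient field: $\msc = \nabla \sum_i d_i\left(-\tfrac{1}{2}v_i^2 + W_{ii}\Phi(v_i) + I_i v_i\right)$, where $\Phi$ is any antiderivative of $\phi$. So the ``only if'' direction of the proposition as literally stated is false unless one adopts the convention that the recurrent connection matrix has zero diagonal (no self-connections). The paper's proof derives $W_{ij}=W_{ji}=0$ only for off-diagonal pairs and then concludes ``$W$ is a zero matrix'' without addressing the diagonal. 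Your choice to make the missing hypothesis explicit is the correct move; with that convention in place, your proof is complete, and the converse direction is handled identically in both arguments.
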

\begin{proof}

    ($\implies$) If $\msc(\m{v}) = \nabla G(\m{v})$ is a gradient field, then 
    \begin{equation}
        \f{\pa}{\pa v_j}\msc(\m{v})_i = \f{\pa}{\pa v_i} \msc(\m{v})_j = \f{\pa}{\pa v_i \pa v_j} G(\m{v})
    \end{equation}
    because $\msc(\m{v})$ is smooth, and we can switch the order of partial differentiation. Written in the form of \eqref{eq:cureq}, we have for all $v_i$ and $v_j$ (where $i\neq j$),
    \begin{align*}
        \f{\pa}{\pa v_j} \left(-d_i v_i + d_i\sum_{k = 1}^n W_{ik} \phi(v_k) + d_iI_i\right) &= \f{\pa}{\pa v_i}\left( -d_j v_j + d_j\sum_{k = 1}^n W_{jk} \phi(v_k) + d_jI_j\right)\\
        \f{\pa}{\pa v_j} d_i W_{ij}\phi(v_j) &= \f{\pa}{\pa v_i} d_j W_{ji} \phi(v_i)\\
        d_i W_{ij}  \phi_{v}(v_j) &= d_j W_{ji} \phi_{v}(v_i),
    \end{align*}
    since $v_i$ and $v_j$ are arbitrary and $\phi$ is nonlinear (so the derivative $\phi_v$ is not the same for all $v$), this implies $d_i W_{ij} = d_j W_{ji} = 0$. But the decay constants cannot be 0. Therefore, $W_{ij} = W_{ji} = 0$

    ($\impliedby$) If $W$ is a zero matrix, then $\msc(\m{v}) = \nabla D\left(-\f{\m{v} \cdot \m{v}}{2} + I\cdot \m{v}\right)$ is a gradient field. 
\end{proof}

\subsection{Hilbert projection theorem and proof of Proposition \ref{prop:lim_exp}} \label{app:2}
Here we directly state the Hilbert Projection Theorem. For detailed proof, we refer readers to standard functional analysis textbooks \citep{rudin_functional_1991}.
\begin{theorem}\label{thm:hilbert}(Hilbert Projection Theorem)
    Given a Hilbert space $H$, and a nonempty closed convex set $E\subset H$, for any $x\in H$, there exists a unique $\Pi x \in E$, such that $\Norm{\Pi x - x} = \Inf_{c\in E} \Norm{x - c}$. 
\end{theorem}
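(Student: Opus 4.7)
The plan is to establish existence and uniqueness simultaneously via a minimizing sequence argument anchored by the parallelogram identity. Set $d := \inf_{c\in E}\Norm{x - c}$, which is a finite nonnegative number since $E$ is nonempty. By definition of the infimum, choose a minimizing sequence $(c_n) \subset E$ with $\Norm{x - c_n} \to d$. The goal is to show that $(c_n)$ is Cauchy; once that is done, completeness of $H$, closedness of $E$, and continuity of the norm immediately deliver a limit point $\Pi x \in E$ achieving the infimum.

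The heart of the proof is the Cauchy estimate, which is where both the Hilbert-space structure and the convexity of $E$ enter. I would apply the parallelogram identity to the vectors $u = c_n - x$ and $v = c_m - x$, giving
\begin{equation*}
\Norm{c_n - c_m}^2 \;=\; 2\Norm{c_n - x}^2 + 2\Norm{c_m - x}^2 - 4\left\Norm{\tfrac{c_n + c_m}{2} - x}\right^2.
\end{equation*}
Convexity of $E$ ensures that $\tfrac{c_n+c_m}{2} \in E$, so the last term is bounded below by $4d^2$. Since $\Norm{c_n-x}^2$ and $\Norm{c_m-x}^2$ both tend to $d^2$, the right-hand side tends to $2d^2 + 2d^2 - 4d^2 = 0$, so $(c_n)$ is Cauchy. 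Completeness then yields a limit, which lies in $E$ by closedness, and norm continuity gives $\Norm{x - \Pi x} = d$.

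Uniqueness follows by essentially the same calculation: if $c_1, c_2 \in E$ both realize $d$, applying the parallelogram identity to $c_1 - x$ and $c_2 - x$ and using convexity to bound $\Norm{\tfrac{c_1+c_2}{2} - x}^2 \geq d^2$ forces $\Norm{c_1 - c_2}^2 \leq 2d^2 + 2d^2 - 4d^2 = 0$. The main conceptual obstacle — more than a technical difficulty — is recognizing that the parallelogram identity is the right tool: it is precisely the feature that distinguishes Hilbert spaces from general Banach spaces (where the nearest-point projection onto a closed convex set can fail to exist or to be unique), and it is what allows the midpoint provided by convexity to convert a statement about the minimizing value into quantitative control on the distance between iterates. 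Once that connection is made, the remainder is a routine sequence-of-estimates computation.
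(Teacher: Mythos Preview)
Your argument is correct and is precisely the standard textbook proof via the parallelogram law and a minimizing sequence. Note, however, that the paper does not actually supply its own proof of this theorem: it merely states the result and refers the reader to a standard functional analysis text (Rudin), so there is no original argument in the paper to compare against.
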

Here $\Pi$ is called the orthogonal projection operator. Next, we give the proof of Proposition \ref{prop:lim_exp}.
\begin{proof}
    Here we consider a special case of a Hilbert space $L_2(\R^m,\R^m;p)$,where the inner product is defined by $\inner{f}{g} = \int_{\R^m} \norm{f\cdot g}p(\m{x}) d\m{x}$ and we let $E = \set{A\m{v} + B\phi(\m{v}) + I | A,B\in \R^{m\times m}; I\in\R^{n\times 1}}$. The set of functions $E$ is a closed convex vector subspace of $L_2(\R^m,\R^m;p)$ because $E$ is closed under addition and finite-dimensional. Hence by Hilbert projection theorem, there exists a unique $\Pi H\in E$ such that $\Norm{(1 - \Pi) H} = \Inf _{F\in E} \Norm{H - F}>0$. Because $\set{\msc}_\tet \subset E$ if the decay matrix $D$ is diagonal, we have 
    \begin{equation}
        \Inf _{\msc} \Norm{H - \msc} \geq\Inf _{F\in E} \Norm{H - F} = \Norm{(1-\Pi)H} >0.
    \end{equation}
\end{proof}

\section{Proof of Theorem \ref{thm:main}} \label{app:main}
First, we give a statement of the universal approximation theorem as in \citep{leshno_multilayer_1993}.
\begin{thm}\label{thm:uat} (Universal Approximation Theorem)
    The continuous transfer function $\phi$ is not polynomial if and only if for every $k\in \N, m\in \N$, compact set $U\subset \R^k$, $f \in C(U, \R^m)$, and $\eps >0$, there exists $n\in \N, W_2\in \R^{n\times k}, b\in \R^{n}, W_1\in \R^{m\times n}$ such that 
    \begin{equation}
        \sup_{x\in U} \Norm{f(x) - g(x)} < \eps
    \end{equation}
    where $g(x) = W_1\phi(W_2 x + b)$. Or equivalently, $\Sig_n = \text{span}\set{\phi(\m{w}\cdot \mx + b): \m{w}\in \R^k, b\in \R}$ is dense in $C(\R^k)$ if and only if $\phi$ is not polynomial. 
\end{thm}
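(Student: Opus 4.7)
The plan is to handle the two implications separately. The ``only if'' direction is easy via contrapositive: if $\phi$ is a polynomial of degree at most $d$, then every ridge function $\phi(\m{w}\cdot\mx+b)$ is a polynomial in $\mx$ of total degree at most $d$, so $\Sig_n$ sits inside the finite-dimensional space of polynomials of degree $\le d$ on $U$. This subspace is not dense in $C(U,\R^m)$; for instance on a nondegenerate compact set it cannot approximate $\exp(x_1)$ in sup norm. The substantive content is the ``if'' direction, which I plan to prove in three steps: a smooth one-dimensional case via divided differences, removal of the smoothness assumption by mollification, and extension to higher-dimensional inputs and vector outputs.

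For the smooth one-dimensional case ($k=m=1$, $\phi\in C^\infty(\R)$ and not polynomial), the key observation is that for any $a,b\in\R$ and $h\neq0$, the divided difference
\begin{equation*}
\frac{\phi((a+h)x+b)-\phi(ax+b)}{h}
\end{equation*}
lies in $\Sig_n$ and converges uniformly on compacts to $x\,\phi'(ax+b)$ as $h\to0$; hence $x\,\phi'(ax+b)\in\overline{\Sig_n}$, the closure in $C(U)$. Iterating this construction $\ell$ times shows $x^\ell\phi^{(\ell)}(ax+b)\in\overline{\Sig_n}$ for every $\ell\ge 0$. Since $\phi$ is not polynomial, $\phi^{(\ell)}\not\equiv 0$ for every $\ell$; picking $b_\ell$ with $\phi^{(\ell)}(b_\ell)\neq 0$ and setting $a=0$ then places each monomial $x^\ell$ into $\overline{\Sig_n}$. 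Weierstrass's theorem finishes the one-dimensional smooth case: $\overline{\Sig_n}=C(U)$ for compact $U\subset\R$.

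For general continuous $\phi$, I would mollify: for $\psi\in C_c^\infty(\R)$, set $\phi_\psi:=\phi*\psi$, which is smooth, and observe that $\phi_\psi(wx+b)$ is a uniform limit on compacts of Riemann sums $\sum_i\psi(t_i)\phi(wx+b-t_i)\Delta t_i$, each of which lies in $\Sig_n$; hence $\phi_\psi(wx+b)\in\overline{\Sig_n}$. A standard distributional argument (if $\phi*\psi$ were polynomial for every $\psi\in C_c^\infty$, then some distributional derivative of $\phi$ would vanish, forcing $\phi$ itself to be polynomial) produces a mollifier $\psi$ with $\phi_\psi$ smooth and non-polynomial, and the previous step applied to $\phi_\psi$ deposits every polynomial in $x$ inside $\overline{\Sig_n}$. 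For $k>1$, I would use the classical ridge-polynomial identity that the powers $(\m{w}\cdot\mx)^d$, as $\m{w}$ ranges over $\R^k$, span all homogeneous polynomials of degree $d$ in $\mx$; combined with the one-dimensional result applied to $t\mapsto\phi(t)$ at $t=\m{w}\cdot\mx+b$, every polynomial in $\mx$ lies in $\overline{\Sig_n}$, and Stone--Weierstrass closes the argument on any compact $U\subset\R^k$. For vector targets $f:U\to\R^m$, one approximates each coordinate $f_i$ separately and concatenates the resulting hidden units into a single layer with a block-sparse output matrix $W_1$.

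The main obstacle is the divided-difference step in the smooth one-dimensional case: this is the sole place where non-polynomiality of $\phi$ is genuinely exploited, and one must work throughout with the closure $\overline{\Sig_n}$ rather than $\Sig_n$ itself, so the theorem is necessarily an approximation statement rather than an exact-representation one. A subsidiary difficulty is the mollification step, where the delicate point is ensuring that at least one smooth mollification $\phi*\psi$ of a continuous non-polynomial $\phi$ remains non-polynomial; if that failed, the whole chain would collapse, and confirming it requires the short distributional-calculus argument above rather than a purely elementary one.
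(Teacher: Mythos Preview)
Your outline is a correct sketch of the classical Leshno--Lin--Pinkus--Schocken argument, but note that the paper does not supply its own proof of this statement: it is stated as a known result and attributed to \citet{leshno_multilayer_1993}, then used as a black box inside the proof of Theorem~\ref{thm:main}. So there is nothing in the paper to compare your proposal against beyond the citation itself; what you have written is essentially the proof given in that reference (divided differences in the weight parameter to generate monomials in the smooth scalar case, mollification to reduce the continuous case to the smooth one, and the ridge-polynomial span plus Stone--Weierstrass for $k>1$).
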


The theorem says that for any transfer function that is not polynomial ($\tanh$ for example), there is a multi-layer neural network with the transfer function that is able to approximate any continuous function to arbitrary precision on a closed and bounded subset of $\R^m$. The proof of Theorem \ref{thm:main} proceeds in three steps: 
\begin{enumerate}
    \item Choose a large enough compact support $U$ for $s_\tet(\m{x})$ such that 
    \begin{equation} \label{eq:mthm_1}
    \int_{\R^m\setminus U} p(\m{x}) \Norm{\nabla \log p(\m{x}) - s_\tet(\m{x})}^2 d\m{x} = \int_{\R^m\setminus U} p(\m{x}) \Norm{\nabla \log p(\m{x})}^2 d\m{x} <\f{\eps}{2}
    \end{equation}
    \item Choose a large enough $n$ and appropriate $\tet$ such that the universal approximation theorem can be applied to show that 
    \begin{equation} \label{eq:mthm_2}
        \int_{U} p(\m{x}) \Norm{\nabla \log p(\m{x}) - s_\tet(\m{x})}^2 d\m{x}< \f{\eps}{2}
    \end{equation}
    \item  Combine equation \eqref{eq:mthm_1} and \eqref{eq:mthm_2} so that we have
    \begin{equation}
    \begin{split}
      \bbE_{\m{x}\sim p(\m{x})} [\Norm{\nabla \log p(\m{x}) - s_\tet(\m{x})}^2] = \int_{\R^m} p(\m{x}) \Norm{\nabla \log p(\m{x}) - s_\tet(\m{x})}^2 d\m{x}<\eps
    \end{split}
    \end{equation}
\end{enumerate}
We begin with two lemmas that will be helpful in the proof of the theorem. Both of them are standard results \footnote{see \hyperlink{here}{ https://dlmf.nist.gov/8.11}}, but we include here for completeness. Let the upper incomplete Gamma function be defined by $\Gam(s, C) = \int_C^\inf r^{s-1} e^{-r} dr$.
\begin{lemma} \label{lem:upper_gam}
  If $s$ is a positive integer, then $\Gam(s, C) = (s-1)!e^{-C}\sum_{i = 0}^{s-1} \f{C^i}{i!}$
\end{lemma}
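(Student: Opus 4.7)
The plan is a straightforward induction on the positive integer $s$, driven by integration by parts in the integral defining $\Gamma(s,C)$. Since the statement is a standard closed-form evaluation of the upper incomplete Gamma function, there is no conceptual obstacle; the only thing to be careful about is handling the boundary term at infinity (which vanishes because $r^s e^{-r} \to 0$) and checking that the algebraic rearrangement at the inductive step telescopes the sum correctly.

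For the base case $s=1$, I would simply evaluate $\Gamma(1,C) = \int_C^\infty e^{-r}\, dr = e^{-C}$ and observe that the claimed right-hand side reduces to $0!\, e^{-C} \cdot C^0/0! = e^{-C}$, matching.

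For the inductive step, assuming the formula for $s$, I would apply integration by parts to $\Gamma(s+1,C) = \int_C^\infty r^s e^{-r}\, dr$ with $u = r^s$ and $dv = e^{-r}\,dr$. The boundary contribution at $r=\infty$ vanishes by exponential decay, and the contribution at $r=C$ gives $C^s e^{-C}$, yielding the recurrence
\begin{equation*}
\Gamma(s+1,C) \;=\; C^s e^{-C} + s\, \Gamma(s,C).
\end{equation*}
Substituting the inductive hypothesis for $\Gamma(s,C)$ and absorbing the factor $s$ into $(s-1)! = s!/s$ gives
\begin{equation*}
\Gamma(s+1,C) \;=\; s!\, e^{-C}\!\left(\frac{C^s}{s!} + \sum_{i=0}^{s-1} \frac{C^i}{i!}\right) \;=\; s!\, e^{-C}\sum_{i=0}^{s} \frac{C^i}{i!},
\end{equation*}
which is the desired identity for $s+1$. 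Thus the induction closes.

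The only step that could be called a subtlety is the vanishing of the boundary term at infinity: this is immediate since $r^s$ grows only polynomially while $e^{-r}$ decays exponentially, so $\lim_{r\to\infty} r^s e^{-r} = 0$ for every integer $s \ge 0$. There is no genuine obstacle in this lemma, and I would not expect any hidden difficulty beyond bookkeeping of the factorials and the summation index.
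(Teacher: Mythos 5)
Your proof is correct and follows essentially the same route as the paper's: induction on $s$ with the base case $\Gamma(1,C)=e^{-C}$ and the integration-by-parts recurrence $\Gamma(s+1,C)=C^s e^{-C}+s\,\Gamma(s,C)$ at the inductive step. Your explicit justification of the vanishing boundary term at infinity is a small detail the paper leaves implicit, but the argument is the same.
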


\begin{proof}
  This can be proved by induction. For $s = 1$,
  \begin{equation}
    \Gam(1, C) = \int_C^\inf e^{-r} dr = e^{-C}
  \end{equation}
  Therefore the conclusion holds. For $s>1$, we assume that for $s = k$, the conclusion holds. Then for $s = k+1$, Through integration by parts, we have
  \begin{equation}
    \Gam(k+1, C) = s\Gam(k, C) + C^{k} e^{-C}
  \end{equation}
  Therefore
  \begin{equation}
    \begin{split}
      \Gam(k+1, C) &= k(k-1)!e^{-C}\sum_{i = 0}^{k-1} \f{C^i}{i!} + C^{k} e^{-C}\\
      &= k ! e^{-C} \left(\f{C^k}{k!}+ \sum_{i = 0}^{k-1} \f{C^i}{i!}\right)\\
      &= k! e^{-C} \sum_{i = 0}^{k} \f{C^i}{i!}
    \end{split}
  \end{equation}
  Therefore the conclusion holds for $s = k+1$ as well. By induction, it holds for all positive integer $s$. 
\end{proof}

\begin{lemma} \label{lem:gam_lim}
  $\Gam(s, C)\to 0$ as $C\to \inf$. 
\end{lemma}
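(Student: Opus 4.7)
The plan is to give a short argument tailored to the integer values of $s$ that arise in the theorem, and then note the natural extension to general positive $s$. For the integer case, I would simply invoke the closed form derived in Lemma \ref{lem:upper_gam}, namely
\begin{equation*}
  \Gamma(s, C) = (s-1)!\, e^{-C}\sum_{i=0}^{s-1}\frac{C^{i}}{i!}.
\end{equation*}
Since the sum on the right is a polynomial in $C$ of fixed degree $s-1$, and $e^{-C}$ decays faster than any polynomial grows, the product tends to zero as $C\to\infty$. Concretely, one can bound each term by $e^{-C}C^{i}/i!\le e^{-C/2}\cdot(e^{-C/2}C^{i}/i!)$ and observe that $e^{-C/2}C^{i}\to 0$ for every fixed $i$.

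For completeness, I would also record the argument for a general real parameter $s>0$, which is the form in which the incomplete Gamma function is usually stated. Here I would start from the convergence of $\Gamma(s)=\int_{0}^{\infty}r^{s-1}e^{-r}\,dr$, a standard fact valid for all $s>0$. Because the integrand is non-negative and integrable on $[0,\infty)$, the partial integrals $\int_{0}^{C}r^{s-1}e^{-r}\,dr$ form a monotone sequence increasing to $\Gamma(s)$, and therefore the complementary tail
\begin{equation*}
  \Gamma(s,C)=\Gamma(s)-\int_{0}^{C}r^{s-1}e^{-r}\,dr
\end{equation*}
must tend to $0$. This could alternatively be phrased as a direct application of the dominated convergence theorem to $r^{s-1}e^{-r}\mathbf{1}_{[C,\infty)}(r)$.

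There is essentially no technical obstacle in this lemma; it is a standard tail estimate for a convergent improper integral, and the only choice to make is whether to exploit the explicit formula already proved in the preceding lemma or to appeal to the convergence of $\Gamma(s)$. Since the only place this lemma is invoked in the proof of Theorem \ref{thm:main} involves integer $s$, I would present the closed-form argument first and relegate the general case to a single remark.
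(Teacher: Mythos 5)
Your proposal is correct, and its first half is essentially the paper's own argument: for integer $s$ you invoke the closed form from Lemma \ref{lem:upper_gam} and use that $e^{-C}$ beats any fixed-degree polynomial. Your second argument — writing $\Gamma(s,C)=\Gamma(s)-\int_0^C r^{s-1}e^{-r}\,dr$ and letting the tail of a convergent non-negative improper integral go to zero — is a genuinely different and cleaner route: it covers all real $s>0$ in one stroke, needs neither the closed form nor any reduction step, and is the argument one would want if Lemma \ref{lem:upper_gam} were unavailable. The paper instead handles non-integer $s$ by bounding $r^{s-1}\le r^{\lceil s-1\rceil}$ for $r\ge 1$ and reducing to the integer case, which is slightly more work but keeps everything self-contained within the two lemmas. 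One small correction to your closing remark: in the proof of Theorem \ref{thm:main} the lemma is applied with $s=m+k$, where $k>0$ is only assumed to be a positive real, so the invocation does \emph{not} in general involve integer $s$; your general-$s$ argument is therefore not merely a remark for completeness but is actually the case that matters (or else one must perform the ceiling reduction as the paper does). Also, your displayed bound $e^{-C}C^{i}/i!\le e^{-C/2}\bigl(e^{-C/2}C^{i}/i!\bigr)$ is an identity rather than an estimate, though the conclusion $e^{-C/2}C^{i}\to 0$ that you draw from it is of course valid.
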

\begin{proof}
  When $C\to \inf$, $\Gam (s,C)<\int_C^\inf r^{\ceil{s-1}} e^{-r} dr$. Therefore W.L.O.G., we can assume that $s-1$ is a positive integer. By lemma \ref{lem:upper_gam}, 
  \begin{equation}
    \lim_{C\to \inf}\Gam(s, C) = (s-1)! \sum_{i = 0}^{s-1} \lim_{C\to\inf} \f{e^{-C} C^i}{i!} = 0\\
  \end{equation}
  since the exponential term $e^{-C}$ is beyond all orders. 
\end{proof}
Now we give a formal proof of Theorem \ref{thm:main}.

\begin{thm*} 
  Suppose that we are given a probability distribution with continuously differentiable density function $p(\m{x}): \R^m\to \R^+$ and score function $\nabla \log p(\m{x})$ for which there exist constants $M_1, M_2, a, k>0$ such that 
  \begin{align}
      p(\m{x}) &< M_1 e^{-a \Norm{\m{x}}}\\
      \Norm{\nabla \log p(\m{x})}^2 &< M_2 \Norm{\m{x}}^k
  \end{align}
  when $\Norm{\m{x}}>L$ for large enough $L$. Then for any $\eps>0$, there exists a recurrent neural network whose firing-rate dynamics are given by \eqref{eq:out_fp}, whose recurrent weights, output weights and the diffusion coefficient are given by $\wrec\in \R^{n\times n}$ of rank $m$, $\wo \in \R^{m\times n}$, and $\sig\in \R^{n\times m}$ respectively, such that, for a large enough $n$, the score of the stationary distribution of the output units $s_\tet(\m{x})$ satisfies $\bbE_{\m{x}\sim p(\m{x})}[\Norm{\nabla \log p(\m{x}) - s_\tet(\m{x})}^2]<\eps$. 
\end{thm*}

\begin{proof}
  $ $\newline
  \begin{description}
    \item[Step 1] 
      First of all, we would like to find a compact set $U_C = \set{\m{x}|\Norm{\m{x}}_2\leq C}$ such that $\int_{\R^m\setminus U_C} \Norm{\nabla \log p(\m{x}) - s_\tet(\m{x})}^2 p(\mx) d\mx<\f{\eps}{2}$. We can upper bound the integral by the following calculation using spherical coordinates when $C>L$. 
      \begin{align}
          \int_{\R^m\setminus U_C} \Norm{\nabla \log p(\m{x}) - s_\tet(\m{x})}^2 p(\mx) d\mx &= \int_{\R^m\setminus U_C} \Norm{\nabla \log p(\m{x})}^2 p(\mx) d\mx \label{eq:compact_support}\\
          &< \int_{\R^m\setminus U_C} M_1 M_2 e^{-a\Norm{\mx}} \Norm{\mx}^k d\mx\\
          &= \f{2 M_1 M_2 \pi^{m/2}}{\Gam(\f{m}{2})}\int_C^\inf r^{m+k-1}e^{-a r} dr\\
          &= \f{2 M_1 M_2 \pi^{m/2}}{a^{m+k}\Gam(\f{m}{2})}\int_{aC}^\inf u^{m+k-1}e^{-u} du\\
          &= \f{2 M_1 M_2 \pi^{m/2}}{a^{m+k}\Gam(\f{m}{2})}\Gam(m+k, aC)
      \end{align}
      
      Note that we assume $s_\tet(\mx) = \m{0}$ outside of $U_C$ \footnote{Equality \eqref{eq:compact_support} is the only place where this assumption is used. The equality also holds if $s_\tet(\mx)$ grows at a similar rate to $\nabla \log p$.}. By lemma \ref{lem:gam_lim}, there exists a large enough constant $\tC$ such that $\Gam(m+k, a\tC)< \f{a^{m+k}\Gam(\f{m}{2})\eps}{4 M_1 M_2 \pi^{m/2}}$. Therefore 
      \begin{equation}
        \int_{\R^m\setminus U_\tC} \Norm{\nabla \log p(\m{x}) - s_\tet(\m{x})}^2 p(\mx) d\mx<\f{2 M_1 M_2 \pi^{m/2}}{a^{m+k}\Gam(\f{m}{2})}\cdot\f{a^{m+k}\Gam(\f{m}{2})\eps}{4 M_1 M_2 \pi^{m/2}} = \f{\eps}{2}
      \end{equation}

    \item[Step 2] Next we write down the specific form of $s_\tet(\mx)$, the score of the stationary distribution of $\mx$ for the reservoir-sampler dynamics \eqref{eq:out_fp}
      \begin{equation}
        s_\tet(\mx) = 2(\wo \sig\sig^T\wo^T)^{-1}(-\alp \mx + \alp \wo \phi(\twr\mx + I)).
      \end{equation}
      Since $\sig$ is learnable, we can define $\sig = \wo^T(\wo \wo^T)^{-1}$ such that $\wo \sig\sig^T\wo^T$ is an identity matrix \footnote{A technical subtlety is that $\wo \wo^T$ is not necessarily invertible. However, since every transformation considered here is continuous, we could add a small perturbation to $\wo$ such that $\wo \wo^T$ is invertible, and our result still holds.}:
      \begin{equation}
        \begin{split}
          \wo \sig\sig^T\wo^T &= \wo \wo^T(\wo \wo^T)^{-1} (\wo \wo^T)^{-1} \wo \wo^T\\
          &= \m{I}
        \end{split}
      \end{equation}
      As $p$ is continuously differentiable, by Theorem \ref{thm:uat}, there exists $\wo, \twr$ such that 
      \begin{equation} \label{eq:approx}
        \sup_{\mx\in U_\tC} \Norm{\wo\phi(\twr\mx + I) - \left(\mx + \f{\nabla \log p(\mx)}{2\alp}\right)}^2<\f{\eps}{8\alp^2}
      \end{equation}
      Therefore with those choices of $\wo, \twr$ and $\sig$, we have
      \begin{equation}
        \begin{split}
          \int_{U_\tC} \Norm{\nabla \log p(\m{x}) - s_\tet(\m{x})}^2 p(\mx) d\mx &< \sup_{\mx\in U_\tC} \Norm{\nabla \log p(\m{x}) - s_\tet(\m{x})}^2\\
          &=\sup_{\mx\in U_\tC} \Norm{\nabla \log p(\m{x}) - 2(-\alp \mx + \alp \wo \phi(\twr\mx + I))}^2\\
          &=4\alp^2 \sup_{\mx\in U_\tC} \Norm{\f{\nabla \log p(\m{x})}{2\alp} + \mx - \wo \phi(\twr\mx + I)}^2\\
          &<4\alp^2 \cdot\f{\eps}{8\alp^2}\\
          &= \f{\eps}{2}
        \end{split}
      \end{equation}
    \item[Step 3] 
    \begin{equation}
      \begin{split}
        \bbE_{\m{x}\sim p(\m{x})} [\Norm{\nabla \log p(\m{x}) - s_\tet(\m{x})}^2] &= \left(\int_{\R^m\setminus U_{\tC}} +\int_{U_{\tC}}\right)p(\m{x}) \Norm{\nabla \log p(\m{x}) - s_\tet(\m{x})}^2 d\m{x} \\
        &<\f{\eps}{2} + \f{\eps}{2} = \eps \qedhere
      \end{split}
      \end{equation}
  \end{description}
\end{proof}

\section{Pseudocode for training the reservoir-sampler network} \label{app:train_code}
\begin{algorithm}
  \caption{Training RSN}
  \KwIn{Training samples $\set{\mx_i}$, pseudo recurrent weights $\twr$, output weights $\wo$, start noise level $\sig_1$, noise decay factor $C$, number of noise level $N$}
  \KwOut{trained recurrent weights $\wrec$, trained output weights $\wo$, trained diffussion coefficient $\sig$}
  \For{$i = 1, \dots, N$}{
    \While{not done}{
      Sample $\mx$ from the training set\\
      Perturb the samples with Gaussian noise $\widetilde{\mx} = \mx_j + \bs{\eps}$, $\bs{\eps} \sim  \cN(0, \sig_i^2\cI)$\\
      Compute the gradient of the score-matching loss $\f{1}{2}\nabla_{\twr, \wo} \Norm{2\alp(\wo \phi(\twr \mx + I) -\mx) - \f{\mx - \widetilde{\mx}}{\sig_i^2}}^2$\\
      Update $\twr, \wo $ with gradient descent
    }
    $\sig_{i+1} = C\sig_i$
  }
  $\wrec \leftarrow \twr \wo$\\
  $\sig \leftarrow \wo^T(\wo\wo^T)^{-1}$\\
  \Return $\twr, \wo$
\end{algorithm}

\section{Accelerated sampling with reservoir-sampler network} \label{app:accel}
To construct an irreversible dynamics that accelerates sampling, we let the divergence-free (DF) field G be of the form $J\nabla p$, where $J = -J^T$ can be any skew-symmetric matrix that is either learned through training or prescribed. Then from equation \eqref{eq:fp_stat} we see that the drift term $F_\tet$ satisfies
\begin{equation}
F_\tet(\m{v}) = \Sig \nabla \log p(\m{v}) + J\nabla \log p(\m{v}) = (\Sig + J) \nabla \log p(\m{v}).
\end{equation}
Therefore the corresponding score matching problem is $\alpha (\Sig + J)^{-1}(W_{out} \phi(\tilde{W}_{rec} x + I) - x) \approx \nabla\log p(x)$. With a slight modification, the pseudo code for an accelerated RSN training algorithm is given in algorithm \ref{algo:arsn}.
\begin{algorithm}\label{algo:arsn}
  \caption{Training accelerated RSN}
  \KwIn{Training samples $\set{\mx_i}$, pseudo recurrent weights $\twr$, output weights $\wo$, pseudo diffusion coefficient $\tilde{\sig}$, skew-symmetric matrix $J$, start noise level $\sig_1$, noise decay factor $C$, number of noise level $N$}
  \KwOut{trained recurrent weights $\wrec$, trained output weights $\wo$, trained diffussion coefficient $\sig$}
  \For{$i = 1, \dots, N$}{
    \While{not done}{
      Sample $\mx$ from the training set\\
      Perturb the samples with Gaussian noise $\widetilde{\mx} = \mx_j + \bs{\eps}$, $\bs{\eps} \sim  \cN(0, \sig_i^2\cI)$\\
      Compute the gradient of the score-matching loss $\f{1}{2}\nabla_{\twr, \wo, \tilde{\sig}, J} \Norm{\alp (\f{1}{2}\tilde{\sig}\tilde{\sig}^T + J)^{-1}(\wo \phi(\twr \mx + I) -\mx) - \f{\mx - \widetilde{\mx}}{\sig_i^2}}^2$\\
      Update $\twr, \wo, \tilde{\sig}$ with gradient descent\\
      Update $J$ with gradient descent (Optional)
    }
    $\sig_{i+1} = C\sig_i$
  }
  $\wrec \leftarrow \twr \wo$\\
  $\sig \leftarrow \wo^T(\wo\wo^T)^{-1} \tilde{\sig}$\\
  \Return $\twr, \wo$
\end{algorithm}
We repeat the experiment in Section \ref{subsec:bimodal} with the accelerated RSN. The results are shown in Figure \ref{fig:accel}. We see that the accelerated RSN is able to sample from the distribution faster than the RSN. 

\begin{figure}
  \centering
  \includegraphics[width = \textwidth]{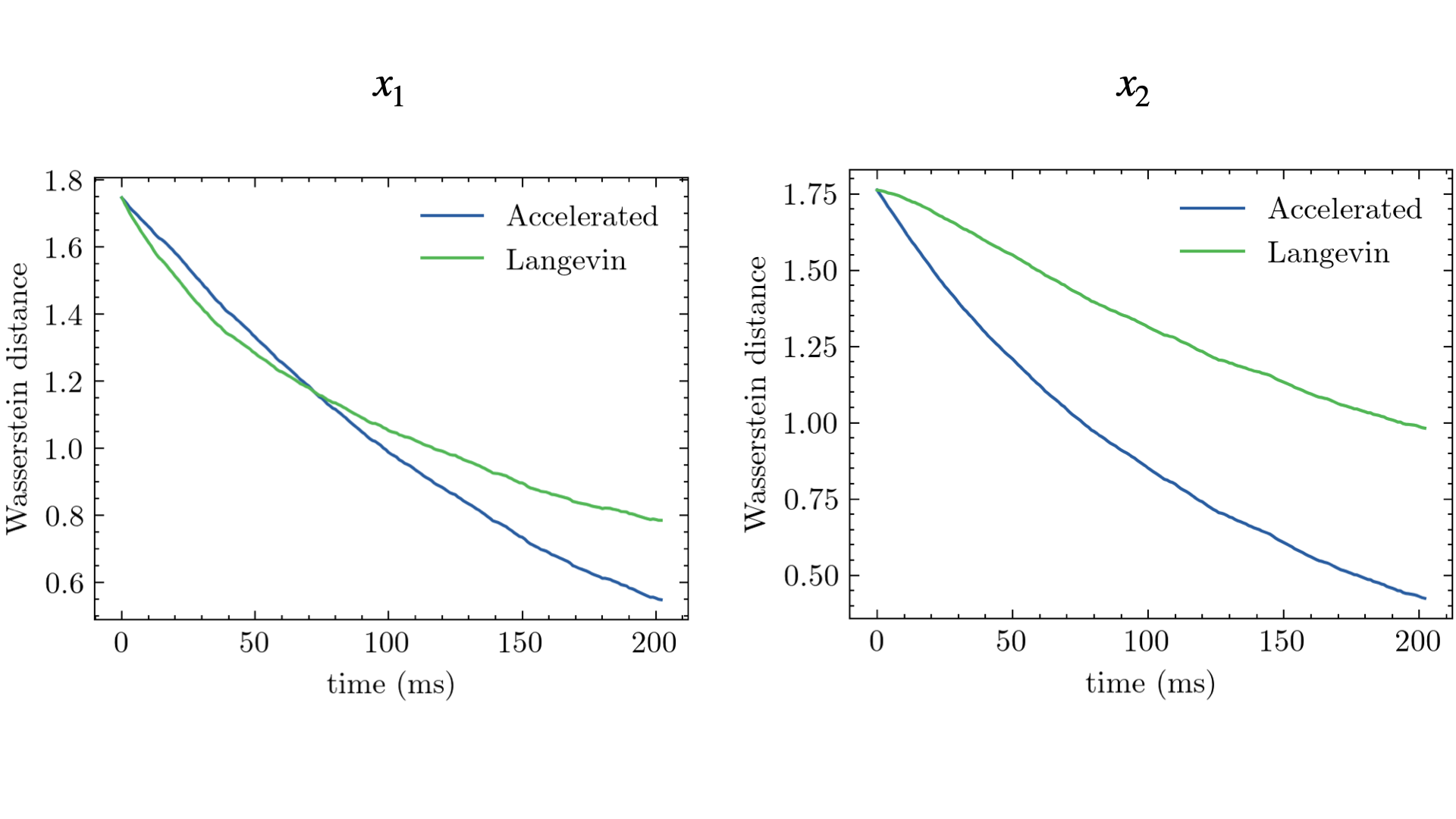}
  \caption{\textbf{Sampling speed of accelerated RSN} We repeat the experiment described in Section \ref{subsec:bimodal} where we try to sample from a 2-D Laplacian mixture distribution. The figure shows the Wasserstein distance between the marginalized distributions of generated samples and true samples for both dimensions. We see that compared to vanilla RSN, accelerated RSN is able to sample from the stationary distribution with faster relaxation time. }
  \label{fig:accel}
\end{figure}
\section{Additional experiments} \label{app:exp}
\subsection{Failure cases for sampler-only networks with hyperbolic tangent nonlinearity} \label{app:tanh_failure}
Figure \ref{fig:tanh_succeeds} shows that with tanh nonlinearity, sampler-only network is able to sample from the distribution we consider in Section \ref{subsec:bimodal}. However, this is because the score function of the distribution considered can be spanned by just two basis functions equipped by the drift term of the sampler-only network. Therefore, our theory predicts that the sampler-only network will not be able to approximate score functions that are not exactly spanned by these two basis functions. Indeed, if we consider a slightly different Gaussian mixture distribution, $p_{\mrm{data}}(x) = \f{1}{2} (\cN(-1, 0.0625) + \cN(1, 0.25))$, then the sampler-only network will not be able to recover the score function as shown in Figure \ref{fig:tanh_fail}, while RSN still succeeds in recovering the score function.
\begin{figure}[htbp]
  \centering
  \includegraphics[width = \textwidth]{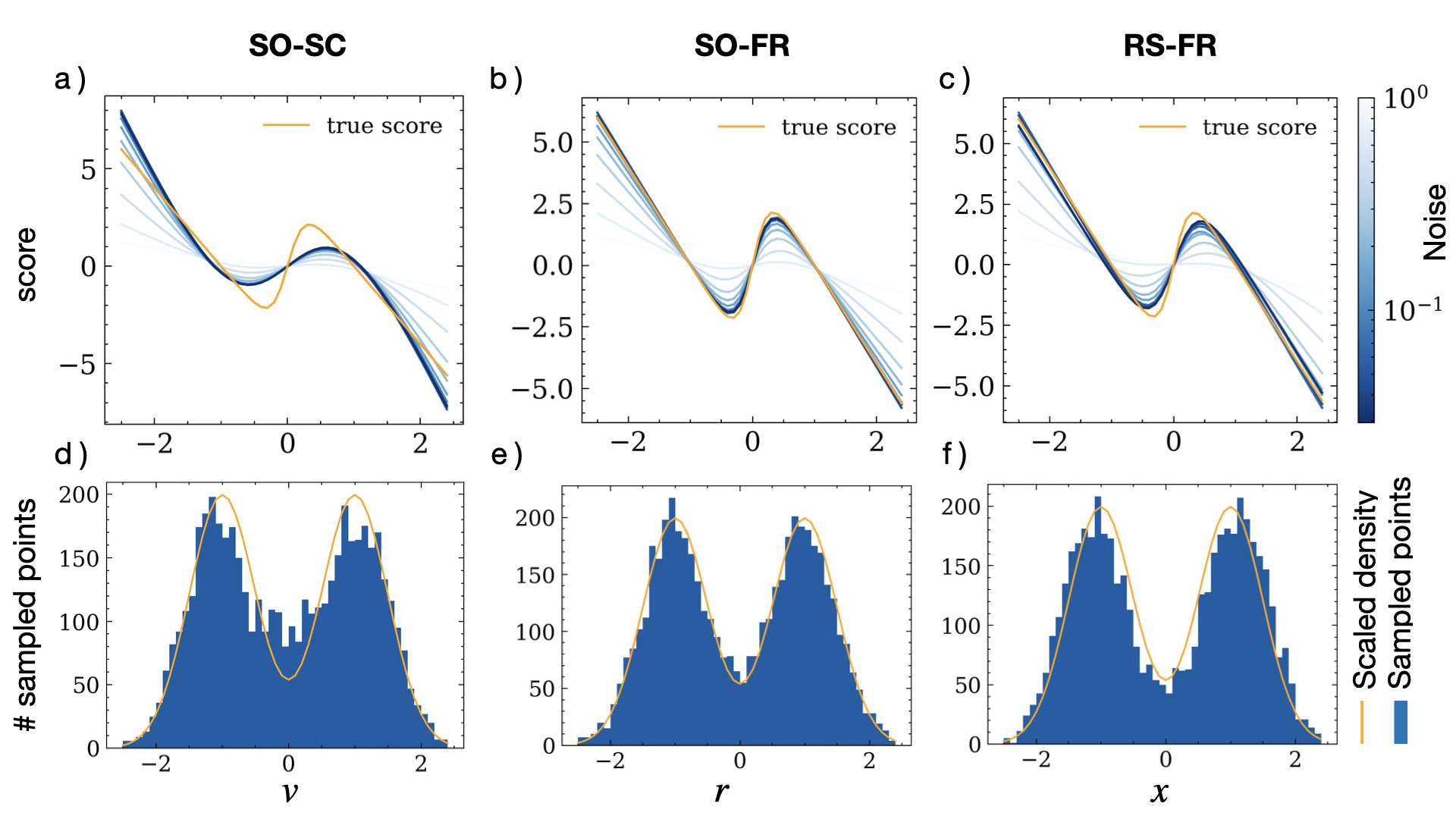}
  \caption{\textbf{Sampler-only networks with hyperbolic tangent nonlinearity are able to sample from a 1-D Gaussian mixture distribution with a symmetric score function.}. See Figure \ref{fig:DP} for an explanation of the abbreviations. Here we use tanh nonlinearity for sampler-only networks (SO-SC and SO-FR). All 3 types of networks are able to match the score function and sample from the distribution.}
  \label{fig:tanh_succeeds}
\end{figure}
\begin{figure}[htbp]
  \centering
  \includegraphics[width = \textwidth]{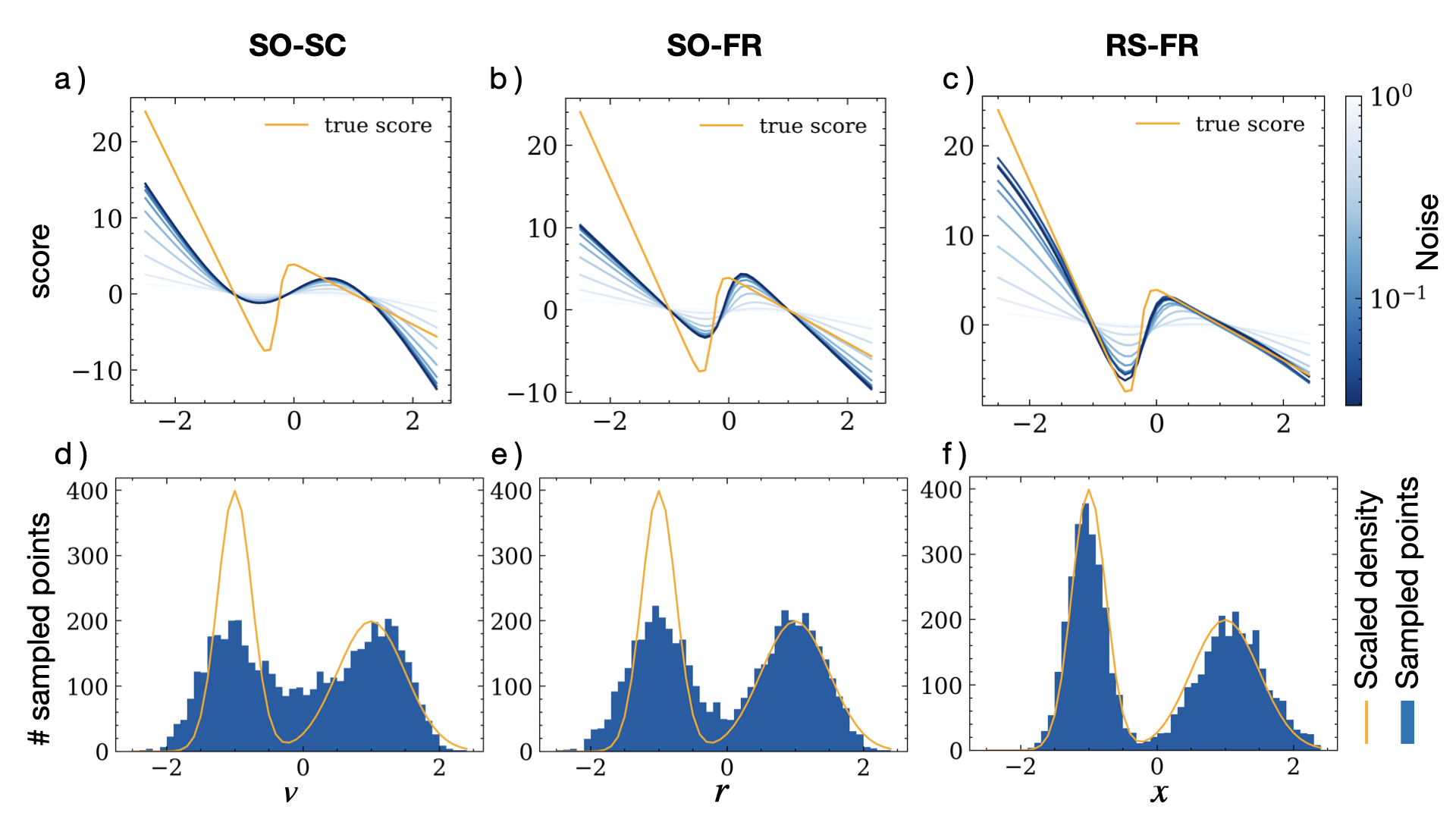}
  \caption{\textbf{Sampler-only network with hyperbolic tangent nonlinearity fails to sample from a 1-D Gaussian mixture distribution with a non-symmetric score function.} See Figure \ref{fig:DP} for an explanation of the abbreviations. Here we use tanh nonlinearity for sampler-only networks (SO-SC and SO-FR). Only RSN is able to match the score function and sample from the distribution.}
  \label{fig:tanh_fail}
\end{figure}

\subsection{Sampling using autoencoder}
We show that we can replace the fixed linear PCA projection used in Section \ref{subsec:MNIST} with pretrained nonlinear autoencoder. We test the algorithm on both the MNIST dataset (Figure \ref{fig:ae_mnist}) and the CIFAR-10 dataset (Figure \ref{fig:ae_cifar}).
\begin{figure}[htbp]
  \centering
  \includegraphics[width = \textwidth]{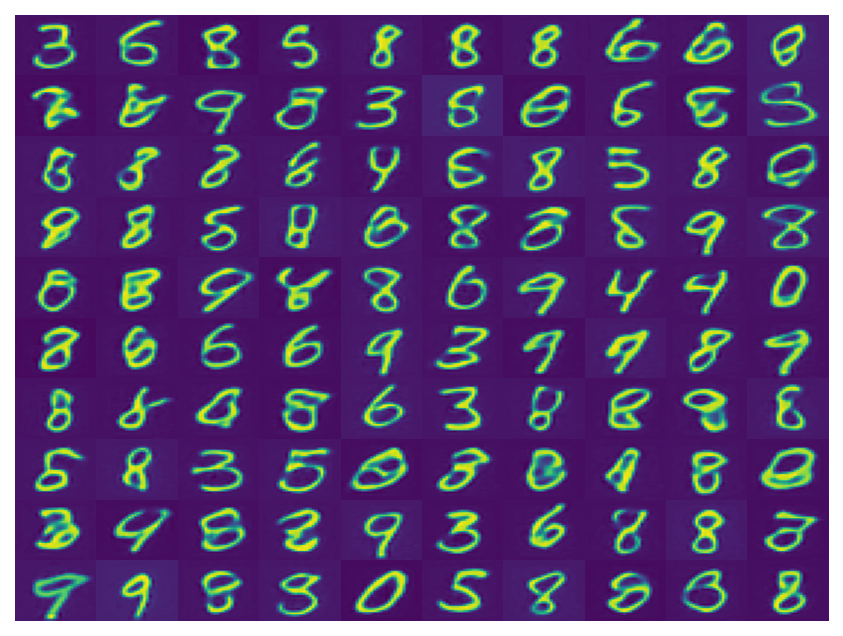}
  \caption{Samples produced by our reservoir sampler network when trained to fit the latent distribution produced by a non-linear convolution autoencoder from MNIST. The distribution was 32-dimensional. The reservoir sampler network had 500 reservoir neurons.}
  \label{fig:ae_mnist}
\end{figure}
\begin{figure}[htbp]
  \centering
  \includegraphics[width = \textwidth]{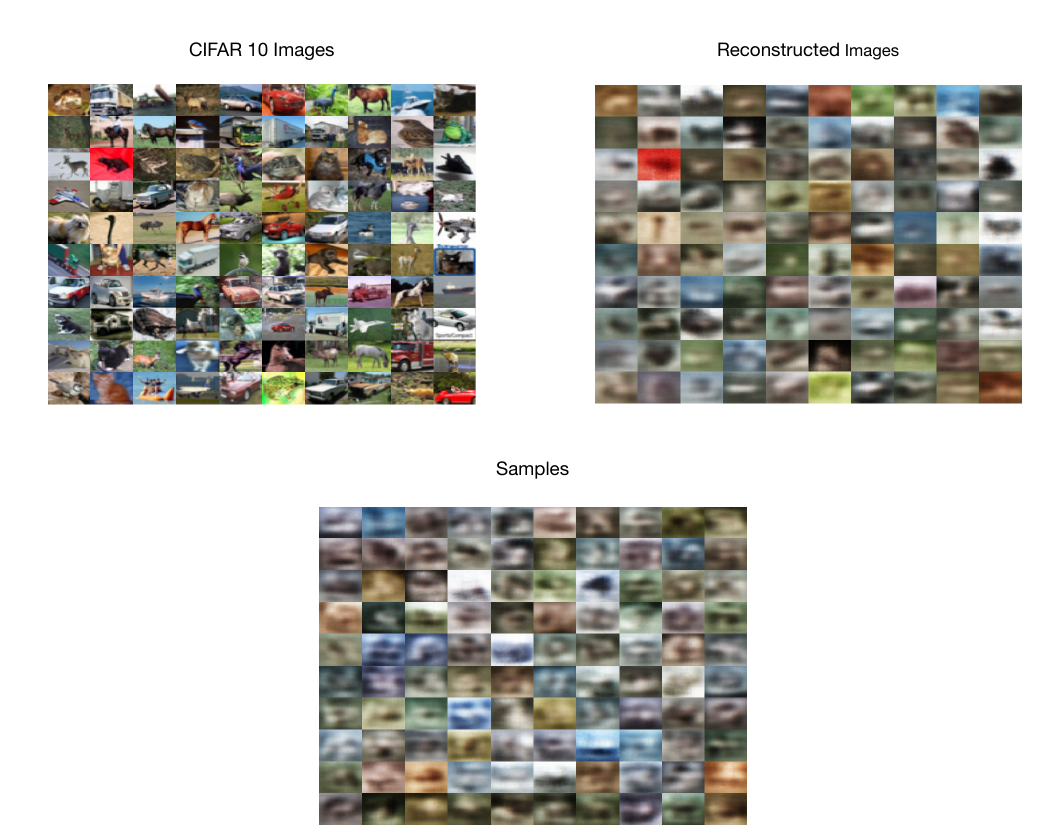}
  \caption{Samples produced by our reservoir sampler network when trained to fit the latent distribution produced by a non-linear convolution autoencoder from CIFAR-10. We show both the original images and the reconstructed images to evaluate the sampling quality. The distribution was 128-dimensional. The reservoir sampler network had 20000 reservoir neurons.}
  \label{fig:ae_cifar}
\end{figure}
\section{Related works} \label{app:equivalence}
In this section, we examine the neural dynamics considered in \citep{dong_adaptation_2022, aitchison_hamiltonian_2016, masset_natural_2022, echeveste_cortical-like_2020} and discuss how our work is related. 

\citet{dong_adaptation_2022} uses Hamiltonian Langevin dynamics to accelerate sampling in continuous attractor networks. The Hamiltonian dynamics below are considered in the paper (cf. eq (19) and (20)):
\begin{equation}\label{eq:HLD}
  \begin{cases}
    \tau_s \f{d\m{s}}{dt} &= \bs{\alp} \m{y}\\
    \tau_z \f{d\m{y}}{dt} &= -\bs{\beta \alp^{-1}} \m{y} + \Lambda (\m{s^o - s}) + \sqrt{\tau_z}\bs{\sig_y \xi}_t.
  \end{cases}
\end{equation}
The overdamped Langevin dynamics that samples from the same marginal stationary distribution over $\m{s}$ as \eqref{eq:HLD}:
\begin{equation}
\f{d\m{s}}{dt} =  \Lambda (\m{s^o - s})  + \sqrt{2}\bs{\xi}_t,
\end{equation}
where $\m{s^o}$ is an observation generated by a latent feature $\m{s}$ according to a Gaussian distribution, and the linear term $\Lam(\m{s^o - s})$ comes from the external input. So the network dynamics in \citet{dong_adaptation_2022} themselves do not generate samples without the external input, and the study does not clarify how to supply the network with nonlinear score functions. 


\citet{aitchison_hamiltonian_2016} have shown that Hamiltonian Monte Carlo (HMC) naturally maps onto the dynamics of excitatory-inhibitory neural networks. However, the neural dynamics considered are entirely linear in terms of the network variables (cf. eq (23) and (24)). Therefore, the dynamics are again equivalent to the synaptic current dynamics with an identity transfer function in Section \ref{subsec:synap}. The same applies to the analysis in \citet{masset_natural_2022}, which also only considers linear neural dynamics. By Proposition \ref{prop:lim_exp} and the decomposition discussed in \citet{kwon_structure_2005}, these dynamics can only sample from probability distributions whose score functions are linear, i.e. Gaussians. 

The work by \citet{echeveste_cortical-like_2020} provides a unifying model for several dynamical phenomena in sensory cortices by training a nonlinear recurrent excitatory-inhibitory neural circuit model to perform sampling-based probabilistic inference. The synaptic neural dynamics below (cf. equation (8)) is used, where the nonlinear transfer function is supralinear, i.e. $\phi(u) = k \ceil{u}^n$:
\begin{equation} \label{eq:echeveste}
  \tau_i \f{du_i}{dt} = -u_i + \sum_{j=1}^N w_{ij} \phi(u_j) + h_i(t) + \eta_i(t).
\end{equation} 
 As discussed in Section \ref{subsec:synap}, while they are richer than linear dynamics, dynamics of this form still have limited ability to sample from complex probability distributions. Moreover, we note that the training criterion considered in \citet{echeveste_cortical-like_2020} only matches the mean and the variance of the stationary distribution. These points are in contrast to our work here, where we show that the reservoir-sampler network, which extends the dynamics \eqref{eq:echeveste}, can be trained via alternate methods to sample from complex probability distributions by matching their score functions.

\section{Details of numerical experiments} \label{app:train_detail}
All experiments were run on one NVIDIA Quadro RTX 6000 GPU. We always use the model saved at the last training iteration. 
\subsection{Training details for Section \ref{subsec:bimodal}}
For the 1-D bimodal Gaussian distribution experiment (Figure \ref{fig:DP}), we used 1000 reservoir neurons (only in the reservoir sampler network) and 1 sampler neuron. We trained all 3 networks (SO-FR, SO-SC, RS-FR) for 79000 iterations with the Adam optimizer. The learning rate was 0.0001, and the batch size was 128. We started by adding Gaussian noise with standard deviation (std) $\sig_1 = 1$ to the training samples. We decreased the std every 7900 iterations by a constant factor, i.e. $\sig_{i+1}/\sig_{i} = C$, such that $\sig_{10} = .01$. When simulating the stochastic neural dynamics, we used a step size of $10^{-4}$, and ran 10000 steps. 

For the mixture of 2-D Laplace distributions experiment (Figure \ref{fig:lap}), we used 1000 reservoir neurons (only in the reservoir sampler network) and 2 sampler neurons. We trained the RS-FR network for 62,500 iterations with the Adam optimizer. The learning rate was 0.0001, and the batch size was 128. We started by adding Gaussian noise with std $\sig_1 = 1$ to the training samples. We decreased the std every 6250 iterations by a constant factor, i.e. $\sig_{i+1}/\sig_{i} = C$, such that $\sig_{10} = .01$. When simulating the stochastic neural dynamics, we used a step size of $10^{-4}$, and ran 20000 steps.

\subsection{Training details for Section \ref{subsec:MNIST}}
For the MNIST experiment (Figure \ref{fig:DP}), we used 20000 reservoir neurons (only in the reservoir sampler network) and 300 sampler neurons. We trained all 3 networks (SO-FR, SO-SC, RS-FR) for 1000 epochs with the Adam optimizer. The learning rate was 0.001, and the batch size was 64. We started by adding Gaussian noise with std $\sig_1 = 1$ to the training samples. We decreased the std every 100 epochs by a constant factor, i.e. $\sig_{i+1}/\sig_{i} = C$, such that $\sig_{10} = .01$. When simulating the stochastic neural dynamics, we used a step size of $10^{-6}$, and ran 30000 steps. 


        
\section{Statement of code availability}
All code is available on \href{https://github.com/chinsengi/score_RNN}{Github} 

\end{document}